\newif\ifarxiv
\newcommand{\cutout}[1]{\textcolor{green}{#1}}
\renewcommand{\cutout}[1]{}%
\newenvironment{customthm}[1]
{\innercustomthm}
{\endinnercustomthm}
\newcommand{\mkmcal}[1]{\ensuremath{\mathcal{#1}}\xspace}
\renewcommand{\S}{\mkmcal{S}}
\renewcommand{\A}{\mkmcal{A}}
\newcommand{\etal}{{et al.}\xspace}
\newcommand{\sublab}[1]{{{(#1)}}}
\newtheorem{observation}{Observation}
\crefname{observation}{observation}{observation}
\Crefname{observation}{Observation}{Observation}
\crefname{lemma}{lemma}{lemma}
\Crefname{lemma}{Lemma}{Lemma}
\newcommand{\soeren}[2][noinline]{\personaltodo[#1]{RedOrange!30}{ST}{#2}}
\newcommand{\alexandra}[2][noinline]{\personaltodo[#1]{Orange!30}{AW}{#2}}
\newcommand{\guenter}[2][noinline]{\personaltodo[#1]{Dandelion!30}{GR}{#2}}
\newcommand{\newtextrev}[1]{#1}
\newcommand{\newtext}[1]{#1}
\newcommand{\removedtext}[1]{}%
\newcommand{\NkR}[1]{\textsf{N#1R}}
\newcommand{\gadvar}[1]{$\mathcal{N}(#1)$}
\newcommand{\gadedg}[1]{$\mathcal{L}(#1)$}
\newcommand{\bad}{popular}
\newcommand{\notbad}{unpopular}
\newcommand{\good}{\notbad}
\newcommand{\pnec}{\textsf{PNEC}\xspace}
\newcommand{\snesc}{\textsf{SNESC}\xspace}
\newcommand{\duplicate}{popular\xspace}
\newcommand{\arxivTHENecg}[2]{\ifarxiv#1\else#2\fi}
\newcommand{\refAppendix}[2]{Appendix~\ref{#1}}
\newcommand{\refAppendix}[2]{the full version \cite[Appendix~#2]{denooijer2022removing}}
\newif\ifdoubleblind
\newcommand{\removeDB}[2]{\ifdoubleblind#1\else#2\fi}
\begin{document}
\title{\texorpdfstring{\looseness=-1}{}
  Removing Popular Faces in Curve Arrangements\texorpdfstring{\thanks
  {Authors are sorted %
    by seniority. %
  }}{}
}

\newif\ifmergep
\mergeptrue
\newif\ifwebpage
\newif\ifnewlinewebpage
\newif\ifnewlineemail
\newif\ifonlyuniversity
\webpagefalse
\newlinewebpagefalse
\newlineemailfalse
\onlyuniversitytrue

\newcommand{\web}[1]{\ifwebpage\ifnewlinewebpage\\\else$\vert$ \fi #1\else\fi}
\newcommand{\mail}[1]{\ifnewlineemail\\\else$\vert$ \fi #1}
\newcommand{\uni}[2]{\ifonlyuniversity#1\else#2\fi}

\author{Phoebe de Nooijer\inst{1} \and
Soeren Terziadis\inst{2}\orcidID{0000-0001-5161-3841} \and
Alexandra Weinberger\inst{3}\orcidID{0000-0001-8553-6661} \and
Zuzana Mas{\'a}rov{\'a}\inst{4}\orcidID{0000-0002-6660-1322} \and
Tamara Mchedlidze\inst{\ifmergep 1\else 6\fi}\orcidID{0000-0002-1545-5580} \and
Maarten {L{\"o}ffler}\inst{\ifmergep 1\else 6\fi}\orcidID{0009-0001-9403-8856} \and
G{\"u}nter Rote\inst{\ifmergep 5\else 7\fi}\orcidID{0000-0002-0351-5945}
}

\authorrunning{%
  de Nooijer, %
  Terziadis, %
  Weinberger, %
  Mas\'arov\'a, %
  Mchedlidze, %
  Löffler, %
  Rote}

\institute{
	\uni{Utrecht University, Utrecht, the Netherlands}{Geometric Computing Group, Utrecht University, Utrecht, the Netherlands}
	\ifmergep \\
        \email{%
          m.loffler@uu.nl $\vert$ \email{t.mtsentlintze@uu.nl}} \web{\href{http://uu.nl/medewerkers/MLoffler}{uu.nl/medewerkers/MLoffler} $\vert$ \href{http://uu.nl/medewerkers/TMtsentlintze}{uu.nl/medewerkers/TMtsentlintze}}\else \mail\email{p.denooijer@students.uu.nl}\fi \and
	\uni{TU Wien, Vienna, Austria}{Algorithms and Complexity Group, TU Wien, Vienna, Austria}
	\mail{\email{soeren.nickel@ac.tuwien.ac.at}} \web{\href{http://ac.tuwien.ac.at/people/soeren-nickel/}{ac.tuwien.ac.at/people/soeren-nickel/}} \and
	\uni{Graz University of Technology, Graz, Austria}{Institute of Software Technology, TU Graz, Graz, Austria}
	\mail{\email{weinberger@ist.tugraz.at}} \web{\href{http://ist.tugraz.at/weinberger/}{ist.tugraz.at/weinberger/}} \and
	\uni{IST Austria, Maria Gugging, Austria}{Edelsbrunner Group, IST Austria, Maria Gugging, Austria}
	\mail{\email{zuzana.masarova@ist.ac.at}} \web{\href{http://research-explorer.app.ist.ac.at/person/45CFE238-F248-11E8-B48F-1D18A9856A87}{research-explorer.app.ist.ac.at/person/45CFE238-F248-11E8-B48F-1D18A9856A87}} \and
	\ifmergep\else \uni{Utrecht University, Utrecht, the Netherlands}{Visualization and Graphics Group, Utrecht University, Utrecht, the Netherlands}
	\mail{\email{t.mtsentlintze@uu.nl}} \web{\href{http://uu.nl/medewerkers/TMtsentlintze}{uu.nl/medewerkers/TMtsentlintze}} \and\fi
	\ifmergep\else \uni{Utrecht University, Utrecht, the Netherlands}{Visualization and Graphics Group, Utrecht University, Utrecht, the Netherlands}
	\mail{\email{m.loffler@uu.nl}} \web{\href{http://uu.nl/medewerkers/MLoffler}{uu.nl/medewerkers/MLoffler}} \and\fi
	\uni{Freie Universit\"at %
          Berlin, %
          Germany}{Freie Universit\"at Berlin, %
          Germany}
	\mail{\email{rote@inf.fu-berlin.de}}
	\web{\href{http://page.mi.fu-berlin.de/rote}{page.mi.fu-berlin.de/rote}}
}

\begingroup%
\ifarxiv
\renewcommand{\orcidID}[1]{\href{https://orcid.org/#1}{\includegraphics[scale=.03]{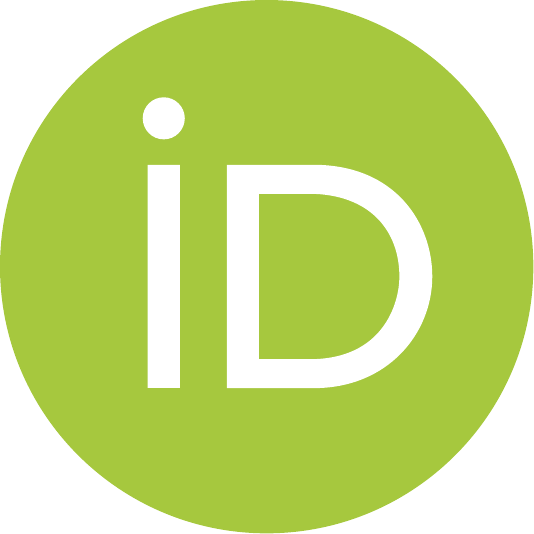}}}
\else
\def\orcidID#1{\href{https://orcid.org/#1}{\includegraphics[scale=.03]{orcid}}}
\fi
\maketitle
\endgroup %
\begin{abstract}
\looseness=-1
A face in a
curve arrangement is called \emph {popular} if it is bounded by the same curve multiple
times.   
Motivated by the automatic generation of curved nonogram puzzles,
we investigate possibilities to eliminate %
\newtext{the}
popular faces in an arrangement %
by inserting a single additional curve.  This turns out to be\cutout{\
  already}
\NP-hard; however,
\newtext{it becomes tractable when the number of popular faces is small:}
We present a probabilistic \FPT-approach
in the number of
\newtext{popular}
\removedtext{such}%
faces. %

\keywords{%
	\looseness=-1
Puzzle generation \and
Curve arrangements \and
Fixed-parameter tractable (FPT).
}
\end{abstract}

\section{Introduction}

\begin {figure}[tbp]
\includegraphics[page=1]{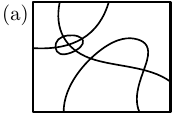}
\hfil
\includegraphics[page=2]{intro1}
\hfil
\includegraphics[page=3]{intro1}
\hfil
\includegraphics[page=4]{intro1}
\caption 
{ \looseness=-1
  \sublab{a} A curve arrangement in a
  rectangular
  frame.
	\sublab{b} The top right 
	face is incident to two disconnected segments of the
	red
	curve,
	making it \emph {popular}.
	\sublab{c} All popular faces are highlighted.
	\sublab{d} After inserting an additional curve, no more popular faces remain.
}
\label {fig:example}
\end {figure}

\looseness=-1
Let $\A$ be a set of curves which lie inside the area bounded by a closed
curve, %
called the \emph {frame}.
All curves in $\A$ are either \emph {closed},
or %
\emph {open}
with %
endpoints on the frame. %
We refer to $\A$ as a \emph {curve arrangement}, see
Figure~\ref {fig:example}a.
We consider only \emph{simple} arrangements, where no three curves
meet in a point and there are only finitely many total intersections, which are all crossings (no tangencies).

The arrangement $\A$ can be seen as an embedded multigraph whose
vertices are crossings of %
curves and whose edges are \emph {curve segments}.
$\A$ subdivides the region bounded by the frame %
into \emph
{faces}. A face is \emph {popular} when it is incident to multiple
curve segments belonging to the same curve in $\A$ (see Figures~\ref
{fig:example}b--c). %
We study the \newtext{\textsc{Nonogram 1-Resolution} (\NkR{1})}
problem: can one additional curve $\ell$ be inserted into $\A$ such
that no faces of $\A \cup \{\ell\}$ are popular (see Figure~\ref
{fig:example}d)?
\looseness-1

\subsubsection {Nonograms.} %

Our question is motivated by the problem of generating \emph {curved nonograms}.
Nonograms, also known as %
{\em Japanese puzzles}, {\em paint-by-numbers}, or {\em griddlers},
are a popular puzzle type where one is given an empty grid and a set of \emph {clues} on which grid cells need to be colored.
A clue consists of a sequence of numbers specifying the numbers of consecutive filled cells in a row or column. A solved nonogram typically results in a picture (see Figure~\ref {fig:nonograms_a}).
There is quite some work in the literature on the difficulty of solving nonograms~\cite {DBLP:journals/icga/BatenburgK12,DBLP:journals/dam/BerendPRR14,DBLP:journals/icga/ChenL19}.

\begin{figure}[tbp]
	\centering
	\subfloat[\label{fig:nonograms_a}]{%
		\includegraphics[page=1]{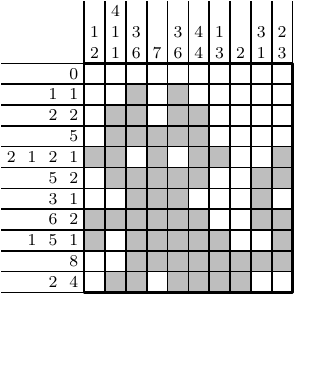}
	}\hfil
	\subfloat[\label{fig:nonograms_b}]{%
		\includegraphics[page=2]{nonograms}
	}
	\caption{Two nonogram puzzles in solved state.
		\sublab{a} A classic nonogram. 
		\sublab{b} A curved nonogram.}
	\label{fig:nonograms}
\end{figure}  

\looseness=-1
Van de Kerkhof~\etal~\cite {DBLP:journals/cgf/KerkhofJPLVK19} introduced {\em curved} nonograms, in
which the puzzle is no longer played on a grid but on an arrangement
of %
curves (see Figure~\ref {fig:nonograms_b}). 
In curved nonograms, clues specify numbers of filled faces of the arrangement in the sequence of faces incident to a common curve on one side.
Van de Kerkhof~\etal focus on heuristics to automatically generate such puzzles from a desired solution picture by extending curve segments to a complete curve arrangement.

\subsubsection {Nonogram complexity.}

Van de Kerkhof~\etal observed that curved nonograms come in different levels
of %
complexity --- not in terms of how hard it is to \emph {solve} a puzzle, but how hard it is to understand the rules (see Figure~\ref {fig:difficulty}).
They state that it would be of interest to generate puzzles of a
specific complexity level; their generators
\newtext{can currently do this only}
\removedtext{are currently not able to do so other than}%
by trial and error.
\begin {itemize}
\item \emph {Basic} nonograms are puzzles in which each clue
  corresponds to a sequence of
  \removedtext{unique}\newtext{distinct}
  faces. The analogy with clues in classic nonograms is straightforward.
\item \emph {Advanced} nonograms may have clues that correspond to a
sequence of faces in which some faces may appear multiple times
because the face is incident to the same curve (on the \emph{same} side)
multiple times. When such a face is filled, it is also counted
multiple times; in particular, it is no longer true that the sum
of the numbers in a clue is equal to the total number of filled
faces incident to the curve. This makes the rules harder to
understand.
\cutout{Advanced nonograms are not recommended for new users.}
\item \emph {Expert} nonograms may have clues in which a single face
  is incident to the same curve on \emph {both} sides.
  They are even more confusing than advanced nonograms. Expert
  nonograms are only suitable for %
experienced puzzle freaks. %
\end {itemize}	

It is easy to see that
arrangements with self-intersecting curves 
correspond exactly to
expert puzzles.
The difference between basic and advanced puzzles is more subtle; it is exactly the presence of \emph {popular faces} in the arrangement.

\begin {figure}[tbp]
\subfloat[]{%
	\includegraphics[page=1]{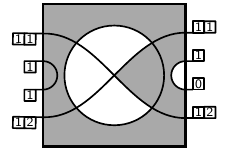}
}\hfil
\subfloat[]{%
	\includegraphics[page=2]{difficulty}
}\hfil
\subfloat[]{%
	\includegraphics[page=3]{difficulty}
}
\caption 
{
	Three types of curved nonograms of increasing
	complexity~\cite
	{DBLP:journals/cgf/KerkhofJPLVK19%
	}, shown with solutions.
	\sublab{a} \emph {Basic} puzzles have no popular faces.
	\sublab{b} \emph {Advanced} puzzles may have popular faces, but no self-intersections.
	\sublab{c} \emph {Expert} puzzles have self-intersecting curves.
	We can observe closed curves (without clues) in \sublab{a} and~\sublab{c}.
}
\label {fig:difficulty}
\end {figure}

One possibility to generate nonograms of a specific complexity would
be to take an existing generator and modify the output.
\newtext{Recently, Brunck et al.~\cite{buchin_et_al:DagRep.12.2.17}
  have investigated how \bad{} faces in a nonogram might be removed by
  reconfiguring and/or reconnecting parts of curves at small local
  areas, which they call switches (e.g. around curve crossings),
  and they have proved that this problem is \NP-hard.}
\newtext
{As an alternative, one may try to get rid of the \bad\ faces by
  adding extra
  curves that cut the \bad\ faces into smaller
  pieces.}
In this paper, we explore what we can do by inserting a single new curve into the \removedtext{output }arrangement.
Clearly, inserting\removedtext{ more}
curves will not remove self-intersections, so we focus on changing advanced puzzles into basic puzzles; i.e., removing all popular faces.

\begin {figure}	
\includegraphics {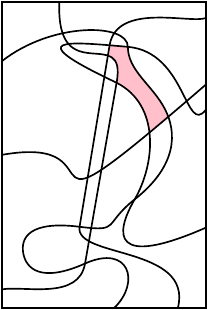}
\hfill
\includegraphics {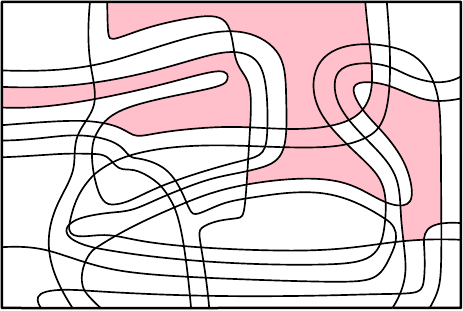}
\caption {Real puzzles (without clues) with all popular faces highlighted.}
\label {fig:puzzles}
\end {figure}

\subsection {Results.} %

After discussing
in Section~\ref{resolve-one-bad-face} how a singular face is resolved,
we show in Section~\ref {sec:hard} that deciding whether we can remove all popular faces from a given curve arrangement by inserting a single curve -- which we call the \NkR{1}~problem -- is \NP-complete.
However, often the number of popular faces is %
small, %
see Figure~\ref {fig:puzzles}.
Hence, we are also interested in the problem parametrized by the number of popular faces $k$. we show in Section~\ref {sec:cycle-edge} that the problem \textcolor{black}{can be solved by a randomized algorithm in \FPT{} time.}

\section{Resolving one \bad{} face by adding a single curve}
\label{resolve-one-bad-face}

As a preparation, we analyze how a single bad face $F%
$ can be resolved.
If $F%
$ is visited %
\newtext{three or more times}
by some
curve,
it
cannot be resolved with a single additional curve~$\ell$, and we can immediately abort.
Otherwise,
there are 
\emph{\duplicate} edges
among the 
edges 
of $F%
$, which belong
to a curve that visits $F%
$ twice.
As a visual aid, we
indicate each such pair of edges by connecting
them with a red
curve (a \emph{curtain}),
see Figure~\ref{fig:one-bad-face_a}
or%
~\ref{fig:app:vertexgadb}.
\begin{observation}
	\label{resolve-bad-face}
	To ensure that a \bad\ face $F$ becomes \good\ after insertion of a single
	curve~$\ell$ into the arrangement, it is necessary and sufficient that the curve~$\ell$ has the following properties.
	\begin{enumerate}
		\item It visits the face $F$ exactly once;
		\item It does not enter or exit through a \duplicate edge;
		\item It separates each pair of \duplicate edges. In other words, %
		$\ell$ cuts all curtains. \qed
	\end{enumerate}
      \end{observation}

\begin{figure}
	\centering
	\subfloat[\label{fig:one-bad-face_a}]{%
		\includegraphics[scale=.95,page=1]{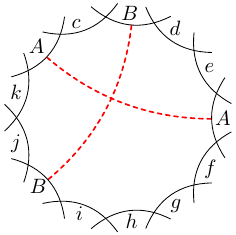}
	}\hfil
	\subfloat[\label{fig:one-bad-face_b}]{%
		\includegraphics[scale=.95,page=2]{bad-face}
	}\hfil
	\subfloat[\label{fig:one-bad-face_c}]{%
		\includegraphics[scale=.95,page=3]{bad-face}
	}
	\caption{Resolving a \bad\ face $F$.
          \newtext{(a) Curtains model %
            \duplicate edges.
     (b)~Possible ways how $\ell$ can pass through $F$.
     (c)~A more compact representation}
        }%
	\label{fig:one-bad-face}
\end{figure}

\looseness=-1
The ways how $\ell$ can traverse a \bad{} face $F$ can be modeled as a
graph: We place a vertex on every edge of~$F$ %
except the \duplicate edges.
We then connect two such vertices $u,v$ if for every curtain~$c$,
the endpoints of $c$ alternate with the vertices $u$ and $v$
around %
$F$, as shown in Figure~\ref{fig:one-bad-face_b}.
This representation can be condensed %
as shown in Figure~\ref{fig:one-bad-face_c} and explained in 
\refAppendix{sec:special-edge-sets}{D}.

\removedtext{
The blue segments in Figure~\ref{fig:one-bad-face_b} show
the %
ways how $\ell$
may pass through a %
\bad\ face.
}

In our \newtext{arguments}, %
we will %
\newtext{often}
use the dual graph $\A^d$ of a curve
arrangement~$\A$, where every face of $\A$ is represented by a vertex and edges represent faces which share a common boundary segment (not just a common crossing point).
In particular, %
a curve $\ell$ traversing $\A$ and crossing a sequence of faces $F_1,
\dots, F_k$ in that order can be expressed as a path $P = (F_1, \dots,
F_k)$ in~$\A^d$.

\section{\NkR{1} is \NP-complete}\label{sec:hard}
In order to prove \newtext{\NP-hardness, we reduce from
\emph{Planar Non-intersecting Eulerian Cycle}%
}.
This reduction assumes $\ell$ to be a closed loop, but %
it can easily be adapted to work for an open curve $\ell'$ starting
and ending at the frame.%

\subsection{Non-intersecting Eulerian cycles}

\newtext{
An Eulerian cycle %
in a graph is a closed walk that contains every edge
exactly once.
An Eulerian cycle in a graph embedded into the plane
(a plane graph)
is \emph{non-intersecting} if every
pair of consecutive edges
$(a,b),(b,c)$
is adjacent %
in the
radial order around~$b$. %
}
\newtext{Intuitively, an Eulerian cycle
  is non-intersecting if it can be drawn without
  repeated vertices %
after replacing each vertex
  by a small cycle linking the incident edges in circular
order (see Figures~\ref{fig:app:EulerianEx_a} and~\ref{fig:app:EulerianEx_b}). The Eulerian cycle has to visit all of the original edges, but it does
not have to cover the small vertex cycles (see
Figure~\ref{fig:app:EulerianEx_c}).}
\newtext{The following} %
problem was proved to be
\NP-complete by Bent and Manber~\cite[Theorem~1]{bent1987non}.

\begin{figure}[htp]
	\centering
	\subfloat[\label{fig:app:EulerianEx_a}]{%
		\includegraphics[page=4]{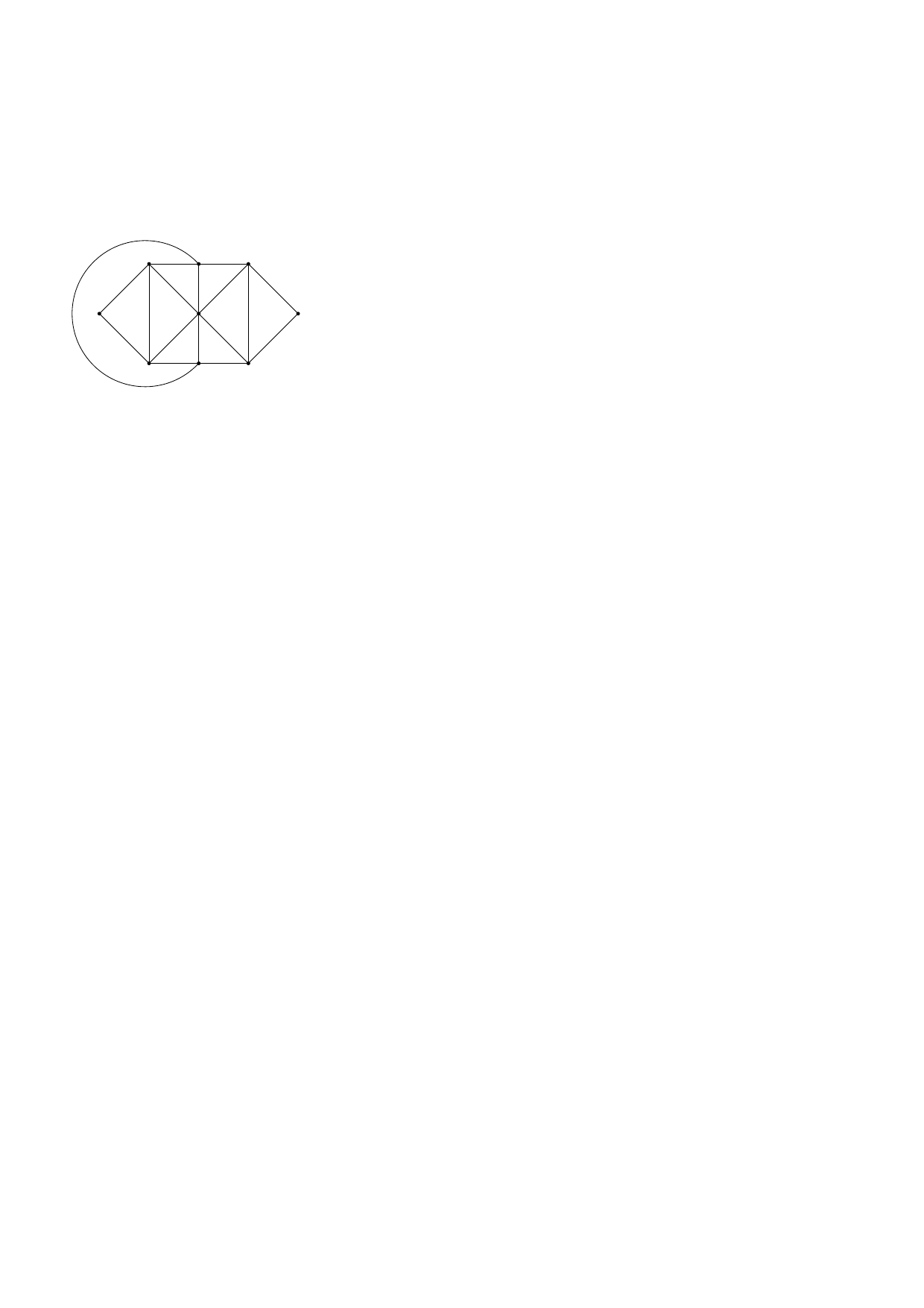}
	}
	\setcounter{subfigure}{10} %
	\subfloat{%
	\includegraphics[page=7,scale=1]{euclidean_ex}
	}
	\setcounter{subfigure}{1}
	\subfloat[\label{fig:app:EulerianEx_b}]{%
		\includegraphics[page=5]{euclidean_ex}
	}\hfil
	\subfloat[\label{fig:app:EulerianEx_c}]{%
		\includegraphics[page=6]{euclidean_ex}
	}
	\caption{
	Vertices of the graph $G$ \sublab{a} are replaced by cycles \sublab{b}. A non-intersecting cycle drawn on the modified graph visiting all original edges \sublab{c}.
	}
	\label{fig:app:EulerianEx}
\end{figure}

\begin{problem}[Planar Non-Intersecting Eulerian Cycle \pnec]\label{app:prob:NCEC}
  Given a
\newtext{planar graph embedded into the plane} %
  graph $G$, decide whether $G$ contains a non-intersecting Eulerian
  cycle. %
\end{problem}

\removedtext{Note that while this question is \NP-complete on general
  graphs, some trivial assumptions can be made on the input.}
\subsection{\NP-completeness reduction}\label{sec:app:hardness}

We will present a \removedtext{complete }polynomial-time reduction from \pnec to
\NkR{1}, i.e., we will create a curve arrangement $\A$ containing
\bad{} faces based on a planar input graph $G$ of \pnec, s.t.\ there exists a curve
$\ell$ \newtext{for which} %
$\A\cup\ell$ contains no \bad{}  faces if and only if
$G$ contains a non-intersecting Eulerian cycle. %
\newtext{We assume that $G$ is 2-edge-connected and
all vertices have even degree, because otherwise, $G$}
clearly cannot contain an Eulerian cycle.
We also replace every self-loop with a path of length two, without affecting the existence of a planar non-intersecting Eulerian cycle.

The reduction is gadget based.
We will represent every vertex $v \in V$ with a vertex gadget \gadvar{v} and every edge $e = (u,v) \in E$ with an edge gadget \gadedg{e} or \gadedg{u,v}.
Both gadgets are sets of curves starting and ending at the frame,
\newtext{and} %
$\newtext{\A} = \bigcup_{v\in V} \text{\gadvar{v}} \cup \bigcup_{e \in E}
\text{\gadedg{e}} %
$.

\subsubsection{Vertex gadgets.}\label{sec:app:vertex_gadget}
The vertex gadgets consist of curves in one of three basic shapes shown in Figure~\ref{fig:app:beakers}, which we call beakers.
  We place one beaker per incident edge of $v$, at the position
  of $v$, all rotated, s.t.\ their \emph{bases} (the lower
  ends in Figure~\ref{fig:app:beakers}) overlap
\removedtext{to form specific overlap patterns}%
\newtext{in a specific pattern}.
The \emph{opening} of each beaker (the upper ends in Figure~\ref{fig:app:beakers}) will point outwards.
We use three variants of the vertex gadget,
depending on the vertex degree.

\newtext{
The vertex gadget for a degree-two vertex is simply made up of two overlaying Type-I beakers (see Figure~\ref{fig:app:minigadget}).
Since $\ell$ must cross the two curtains %
$c_1$ and $c_2$, it must connect the two points $p_1, p_2$ by crossing the overlap of the two beakers (a face of degree two, marked in green in Figure~\ref{fig:app:minigadget}).
Since by Observation~\ref{resolve-bad-face}, $\ell$ can enter any beaker only once, the routing of $\ell$ as shown in Figure~\ref{fig:app:minigadgetresolve} is forced and corresponds exactly to the traversal of a planar non-intersecting Eulerian cycle through a vertex of degree two.
}

The vertex gadget \gadvar{v}
\newtext{for a degree-four vertex $v$}
consists of four Type-I beakers, 
one per incident edge, which\removedtext{ all
  pairwise intersect}
\cutout{have a common intersection
and }%
form the intersection pattern of
Figure~\ref{fig:app:smallgadget}.
\cutout
{The figure also shows %
  the four curtains of the %
  vertex gadget.}
\cutout
{We will refer to the green region in Figure~\ref{fig:app:smallgadget}
as the \emph{interior} of \gadvar{v}. \alexandra{Do we? I don't see interior used anymore.}}
\newtext{Since $\ell$ must cross the four curtains, it must enter
or exit the gadget at least four times through the thick blue edges in 
Figure~\ref{fig:app:smallgadgetdual} and the vertices
$p_1,p_2,p_3,p_4$ of the dual graph
$\A^d$.
Since $\ell$ cannot cross itself, there are only two possibilities how
$\ell$ can pass through \gadvar{v}:
It can connect
$p_1$ with $p_2$ and $p_3$ with $p_4$, as in
Figure~\ref{fig:app:smallgadgetresolve}, or $p_1$ with $p_4$ and $p_2$
with~$p_3$.
Both possibilities can be realized by routings of $\ell$, and 
they correspond precisely to the ways how a non-intersecting Eulerian
cycle
can pass through the edges incident to~$v$.}
Note that the exact routing of $\ell$ can vary inside \gadvar{v} (indicated by the dashed lines in Figure~\ref{fig:app:smallvertexgad}).

\begin{figure}[tbp]
	\centering
	\subfloat[\label{fig:app:beakers}]{%
		\includegraphics[page=9]{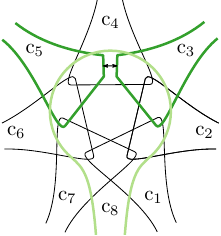}
	}\hfil
	\subfloat[\label{fig:app:minigadget}]{%
		\includegraphics[page=6]{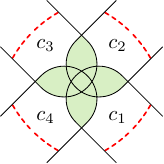}
	}\hfil
	\subfloat[\label{fig:app:minigadgetresolve}]{%
		\includegraphics[page=7]{vertex-gadget2}
	}\\
	\subfloat[\label{fig:app:smallgadget}]{%
		\includegraphics{vertex-gadget2}
	}\hfil
	\subfloat[\label{fig:app:smallgadgetdual}]{%
		\includegraphics[page=3]{vertex-gadget2}
	}\hfil
	\subfloat[\label{fig:app:smallgadgetresolve}]{%
		\includegraphics[page=4]{vertex-gadget2}
	}
	
	\caption{
		\sublab{a} Basic beaker curve shapes.%
		\newtext{
			\sublab{b} Degree two gadget (2 Type-I beakers) and \sublab{c} its forced resolution.
			\sublab{d} Degree four gadget (4 Type-I beakers).
			\sublab{e} Dual graph of the degree 4 gadget.}
		\sublab{f} A possible curve $\ell$ in \newtext{light blue};
		some alternative routings, which %
		connect the same endpoints, in \newtext{dashed light blue}.
	}
	\label{fig:app:smallvertexgad}
\end{figure}

\removedtext
{Any curtain in \gad{v} forces $\ell$ to cross the curtain and enter the interior of \gad{v}.
Since $\ell$ is a single curve (or in our case a closed loop), it cannot have any open ends.}
\removedtext{      
Therefore, the four open ends (blue nodes in Figure~\ref{fig:app:smallgadgetdual}) have to be connected to another open end.
Looking at the dual graph of \gad{v} (green edges and nodes in Figure~\ref{fig:app:smallgadgetdual}), we can see that any path from $p_1$, which enters the interior of \gad{v} and would connect to an open end, which is neither $p_2, p_3$ nor $p_4$, clearly has to contain \newtext{at least} one of these nodes.
Since every node in the cycle in $\A^d$ representing $\ell$ can only be adjacent to two edges in that cycle, any open endpoint of $\ell$ in \gad{v} can only be connected to another open endpoint in the same \gad{v}.
Further note that for any two paths $P_1 = (p_1, \dots, p_3)$ and $P_2 = (p_2, \dots, p_4)$, which enter the interior of \gad{v}, we know that $P_1 \cap P_2 \not= \emptyset$, i.e., two such paths necessarily cross.
Now there are only two valid options of connecting these endpoints, i.e., $p_1$ to $p_2$ and $p_3$ to $p_4$ (shown in Figure~\ref{fig:app:smallgadgetresolve}) or $p_1$ to $p_4$ and $p_2$ to $p_3$.
These correspond to a non-intersecting Eulerian cycle connecting the $i$-th edge to its right or left neighboring edge at $v$.}

The \removedtext{regular }vertex gadget \removedtext{used }for a vertex $v$ of
degree $
\removedtext{\delta(v)}\newtext{d} \geq 6$ is \removedtext{slightly }more complex.
We place $\removedtext{\delta(v)}\newtext{d}
-1$ Type-II beakers $c_1,\ldots,c_{
  \removedtext{\delta(v)}\newtext{d}
  -1}$ symmetrically around the location of $v$ (Figure~\ref{fig:app:vertexgada}).
Each beaker intersects four adjacent beakers (two on each side),
with the exception that $c_{d/2-1}$ and $c_{d/2+1}$ (\newtext{dark green} curves in Figure~\ref{fig:app:vertexgada}) do not intersect.
We place an additional Type-III beaker $c_d$ (the \newtext{light green} curve in Figure~\ref{fig:app:vertexgada})
that
surrounds all bases of the Type-II beakers
and protrudes between $c_{d-1}$ and $c_1$,
such that the intersection pattern of
Figure~\ref{fig:app:vertexgada} arises.

All \bad{} faces and curtains in \gadvar{v} are shown in Figure~\ref{fig:app:vertexgadb}.
The dual of the construction is shown in Figure~\ref{fig:app:vertexgadc}.
\newtext
{
  The curtains in the %
  green faces force $\ell$ to pass from these
  faces to the adjacent small faces with the blue boundaries.  This
  constrains $\ell$ to pass through a chain of faces as shown
  in
  Figure~\ref{fig:app:vertexgadd}.
  The passages from
  these faces to other neighboring faces can now be excluded,
  and the corresponding edges have been removed from the dual graph
  in
  Figure~\ref{fig:app:vertexgadd}.
}

The curtains in the openings of the beakers force the outer blue
endpoints of~$\ell$. \removedtext{ exactly as they did in the simple vertex gadget.}%
\newtext{The endpoint in beaker $c_i$ will be called~$p_i$.} Now we analyze \removedtext{ (similar to the simple vertex gadget),
}which of these endpoints can be connected with each other.
We see that in most cases,
  $p_i$ can only be
  connected to
  $p_{i-1}$ or $p_{i+1}$ without going through another endpoint.
The exception is   
$p_{d/2-1}$ and $p_{d/2+1}$, which can be connected via the inner loop
from $q_1$ to $q_2$.
However, this connection would cut off $p_{d/2}$ from the
remaining points.
We conclude that the visits of $\ell$ to \gadvar u must match
endpoints $p_i$ that are adjacent in the circular order.
There are two matchings,
which correspond to the two possibilities how a
non-intersecting Eulerian cycle can visit~$v$.
Both possibilities can be realized by routings of $\ell$; one is shown
in Figure~\ref{fig:app:vertexgade}, and the other is symmetric.
\cutout{END OF REPLACEMENT PROPOSAL.}

\cutout{
\newtext{Among these endpoints, an endpoint $p_i$ can only be
\soeren{Renamed to be more consistent with the previous case. Added labels to the figure}
  connected to
  $p_{i-1}$ or $p_{i+1}$,}
since any path to a different endpoint would contain one of these
endpoints.
\guenter{not true as written. p3 can go to p5 via the inner
  loop. suggested replacement text for the whole passage is inserted
  higher up.}
\soeren{I think the replacement text resolves this nicely. I would keep the transition sentence to the edge gadgets, but otherwise remove the old text. Marked as \textbackslash cutout (green, 16 lines)}
\alexandra{I agree with Soeren's comment including the suggested cutout. The new proposal is nice.}
The two inner endpoints $q_1, q_2$ cannot be connected to each other,
since $\ell$ has to be one single curve.
Also, any path from \newtext{$q_1$} to an endpoint, which is not \newtext{$p_{d/2}$ or $p_{d/2+1}$} (\newtext{$p_4$ and $p_5$} in Figure~\ref{fig:app:vertexgadd}), will contain \newtext{at least one other} endpoint and therefore \newtext{$q_1$} can only be connected to \newtext{$p_{d/2}$ or $p_{d/2+1}$}.
A similar argument holds for \newtext{$q_2$} the two endpoints \newtext{$p_{d/2-1}$ and $p_{d/2}$}.}

\cutout{
Finally assume that \newtext{$q_1$} and \newtext{$q_2$} are connected to \newtext{$p_{d/2+1}$} and \newtext{$p_{d/2-1}$} respectively.
In that case, any path connecting \newtext{$p_{d/2}$} with another endpoint has to contain \newtext{$p_{d/2-1}$ or $p_{d/2+1}$} and therefore this connection is also not possible.
This means either \newtext{$q_1$ or $q_2$} is connected to \newtext{$p_{d/2}$}.
In the first case \newtext{$q_2$} is connected to \newtext{$p_{d/2-1}$} and \newtext{$p_{d/2+1}$} to its \newtext{counterclockwise neighbor $p_{d/2+2}$} (\newtext{$p_6$} in Figure~\ref{fig:app:vertexgade}), and the latter case is symmetric.
Both cases again correspond to a non-intersecting Eulerian cycle connecting the $i$-th edge to its \newtext{clockwise or counter-clockwise} neighboring edge at $v$.
Note that in both cases, it is impossible for $\ell$ to exit \gad{v} other than over the outgoing blue segments inside the openings of the beakers.}

We now have placed vertex gadgets for all vertices. They require $\ell$ to connect to an endpoint in each opening of a beaker.
With these openings, we will now construct the edge gadgets.

\begin{figure}[tbp]
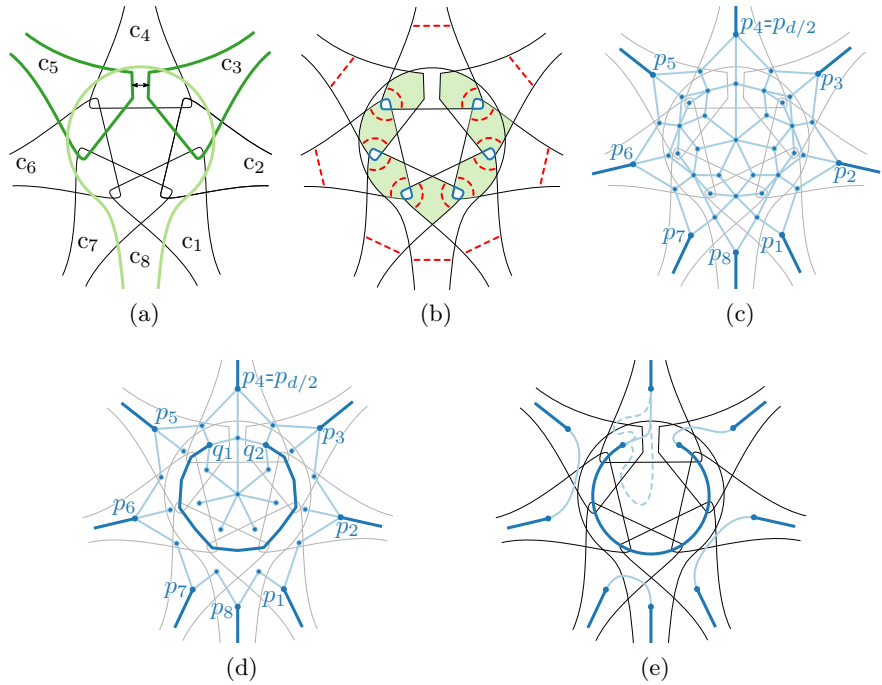

	\centering
	\subfloat[\label{fig:app:vertexgada}]{%
		\includegraphics[scale=.95,page=1]{vertex-gadget-remake}
	}\hfil
	\subfloat[\label{fig:app:vertexgadb}]{%
		\includegraphics[scale=.95,page=3]{vertex-gadget-remake}
	}\hfil
	\subfloat[\label{fig:app:vertexgadc}]{%
		\includegraphics[scale=.95,page=2]{vertex-gadget-remake}
	}\\
	\subfloat[\label{fig:app:vertexgadd}]{%
		\includegraphics[scale=.95,page=10]{vertex-gadget-remake}
	}\hfil
	\subfloat[\label{fig:app:vertexgade}]{%
		\includegraphics[scale=.95,page=6]{vertex-gadget-remake}
	}
	\caption{
		\sublab{a}~%
		Vertex gadget \gadvar{v} 
		for a degree-8 vertex $v$,
		\sublab{b}~its curtains and
		\sublab{c}~dual graph.
		\sublab{d}~Highlighted \newtext{light green} faces in \sublab{b} force the dark blue connections in the dual graph and restrict it. 
		\sublab{e}~One of two symmetric possibilities for the splitting curve~$\ell$.
		\newtext{The dashed lines} show different possible routings of $\ell$.
	}
	\label{fig:app:vertexgad}
\end{figure}

\begin{figure}[tbp]
	\centering
	\includegraphics[page=5]{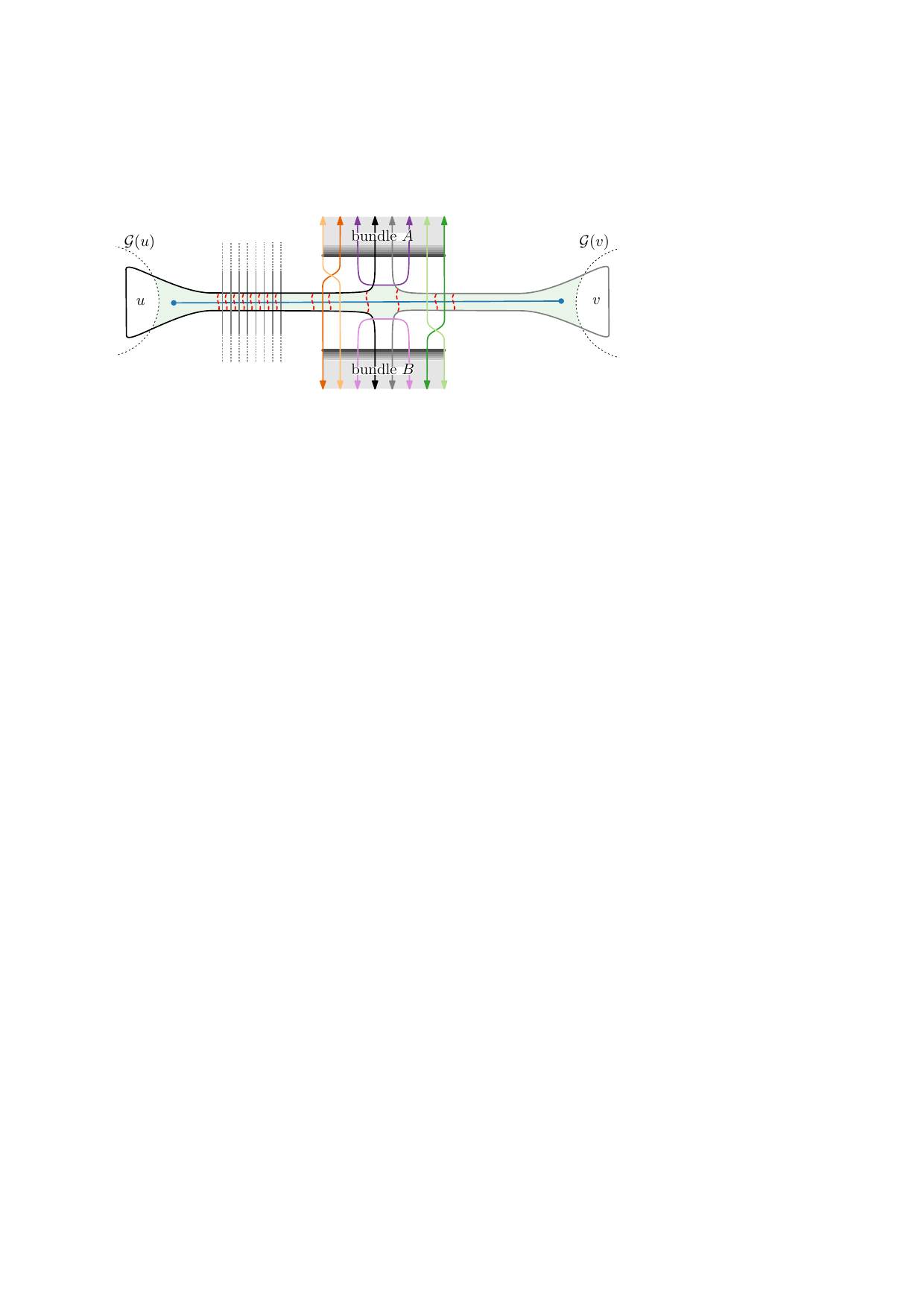} %
	\caption{
		Edge gadget \gadedg{u,v} connecting two beakers from \gadvar u
		and \gadvar v with \newtext{two} additional %
		curves.
		Open ends of all curves
		are collected into
		two bundles of parallel curves that lead into the incident faces.
		The inside of the two beakers are
		connected via a chain of \bad{} faces in \gadedg{u,v}
		(\newtext{shaded in light green}).
		Other bundles, \newtext{like the two groups of four light blue}
		curves
		\newtext{in the left half}, can freely cross either beaker.
	}
	\label{fig:app:edgegadget}
\end{figure}

\subsubsection{Edge gadgets.}\label{sec:app:edge_gadget}
Let $e = (u,v) \in E$ be an edge in $G$.
Then there are two vertex gadgets \gadvar{u} and \gadvar{v} already placed.
In particular, we placed one beaker in the gadgets per incident edge at $u$ or $v$, i.e. two per edge.

We now elongate the open ends of these beakers and route them
along the edge $e$ according to the embedding of $G$ given in the
input (recall that $G$ is a plane graph) until they almost meet at the center point of $e$.
 We bend the ends of each beaker outward,
\removedtext{to the left and one to the right,}%
routing them into the two faces %
incident to~$e$.
Additionally, we place two more curves on top (shown in green),
forming
the
intersection pattern of Figure~\ref{fig:app:edgegadget}.
This results in two  \emph{bundles} %
$A$ and $B$, each consisting
of four parallel curves. %
(The
light blue curves in the left half are not part of the gadget; %
they are two bundles that come from other gadgets.)

This connects a \bad{} face in \gadvar{u} to one in \gadvar{v}, forming one big \bad{} face in \gadedg{u,v}.
An arbitrary number of curves may cross the opening of a beaker.
The %
face is then simply divided into a chain of consecutive \bad{} faces.
\newtext
{In each of these faces,
except the
left- and right-most faces, which contain the blue endpoints,
$\ell$ has to leave
through two specific edges in order to cut the curtains.
This forces
$\ell$ to pass
straight through \gadedg{u,v}
from \gadvar u to \gadvar v
along the thin dark-blue horizontal axis.
}

It \newtext{remains} %
to describe how the open ends of the curves in the bundles are routed to the frame (since all curves other than $\ell$ have start and end at the frame).
\newtext{This is not difficult because these bundles can cross quite
  freely without creating \bad\ faces.
Each bundle consists of a unique set of curves, except for the two
bundles from one edge gadget.
  A bundle that originates from
the edge gadget \gadedg e
  can thus cross any bundle from a different edge gadget without creating
  popular faces.
It can cross a different
 edge gadget \gadedg {e'} by passing over
one of its
beakers, as shown with the light-blue curves.}

\newtext{Since $G$ does not contain self-loops, we route
each bundle along \newtext{a} path in the dual of $G$ to the outer
face of $G$, and then connect it to the frame, see
Figure~\ref{fig:frame_routing}.
}%
\newtext{A popular face might only be created when a bundle crosses the other bundle from the same edge gadget, which can be avoided by routing them in parallel.
}
The curves in a bundle \newtext{run in} %
parallel.
Two bundles originating from %
an edge gadget \gadedg e
have different curves
as their outside curves.
Hence, no \bad{} faces are created between
 two bundles, and we can make the following statement.

\begin{observation}\label{obs:all_bad_faces}

{All \bad{} faces in $\A$ are contained in
vertex gadgets and edge gadgets (the
faces \newtext{with dashed red curtains} in Figures~\ref{fig:app:minigadget}, \ref{fig:app:smallgadget},
\ref{fig:app:vertexgadb},
and~\ref{fig:app:edgegadget}).}

\begin{figure}[tb]
	\centering
	\includegraphics%
        {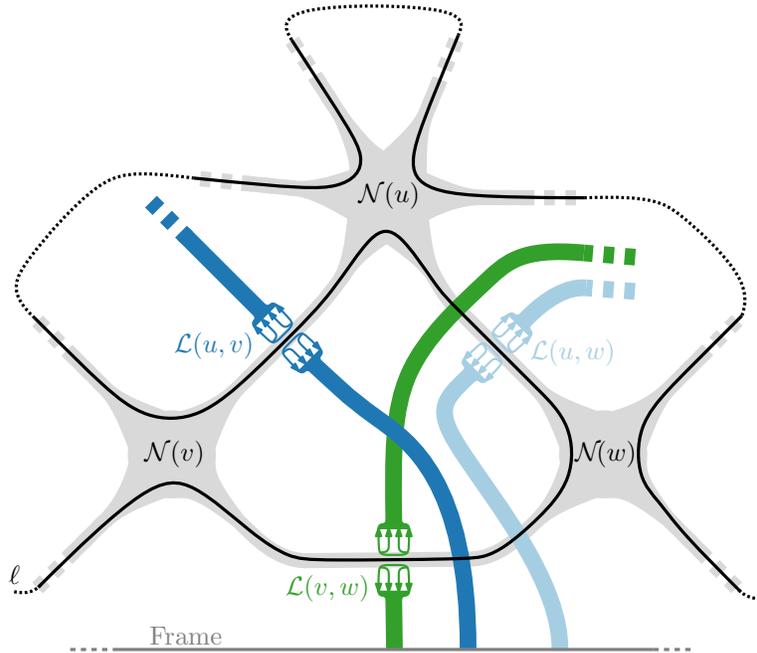}
	\caption{
		Schematic representation of three vertex and edge gadgets.
		The bundles are routed through beakers of other edges ending at the frame (partially shown at the bottom of the figure).
		A possible routing of $\ell$ is shown with a black curve.
	}
	\label{fig:frame_routing}
\end{figure}

\end{observation}

The next theorem follow from the construction and the resulting correspondence between resolving curves and non-crossing Eulerian cycles.
The proof can be found in
\refAppendix{sec:thm_hard_proof}{A}.
\begin{theorem}\label{thm:hardness}
  \NkR{1} is \NP-complete.
  \ifarxiv\else\qed\fi
\end{theorem}

\paragraph{Adaption to open curves. }
The reduction assumes that $\ell$ is a closed loop.
It can be adapted to work for open curves, i.e., we can create the arrangement $\A$, for which there exists an open curve $\ell'$ starting and ending at the frame, s.t.\ $\A\cup\ell'$ does not contain any \bad{} faces, if and only if $G$ contains a planar non-intersecting Eulerian cycle $\mathcal{E}(G)$.

\begin{figure}[htb]
	\centering
	\includegraphics[page=3]{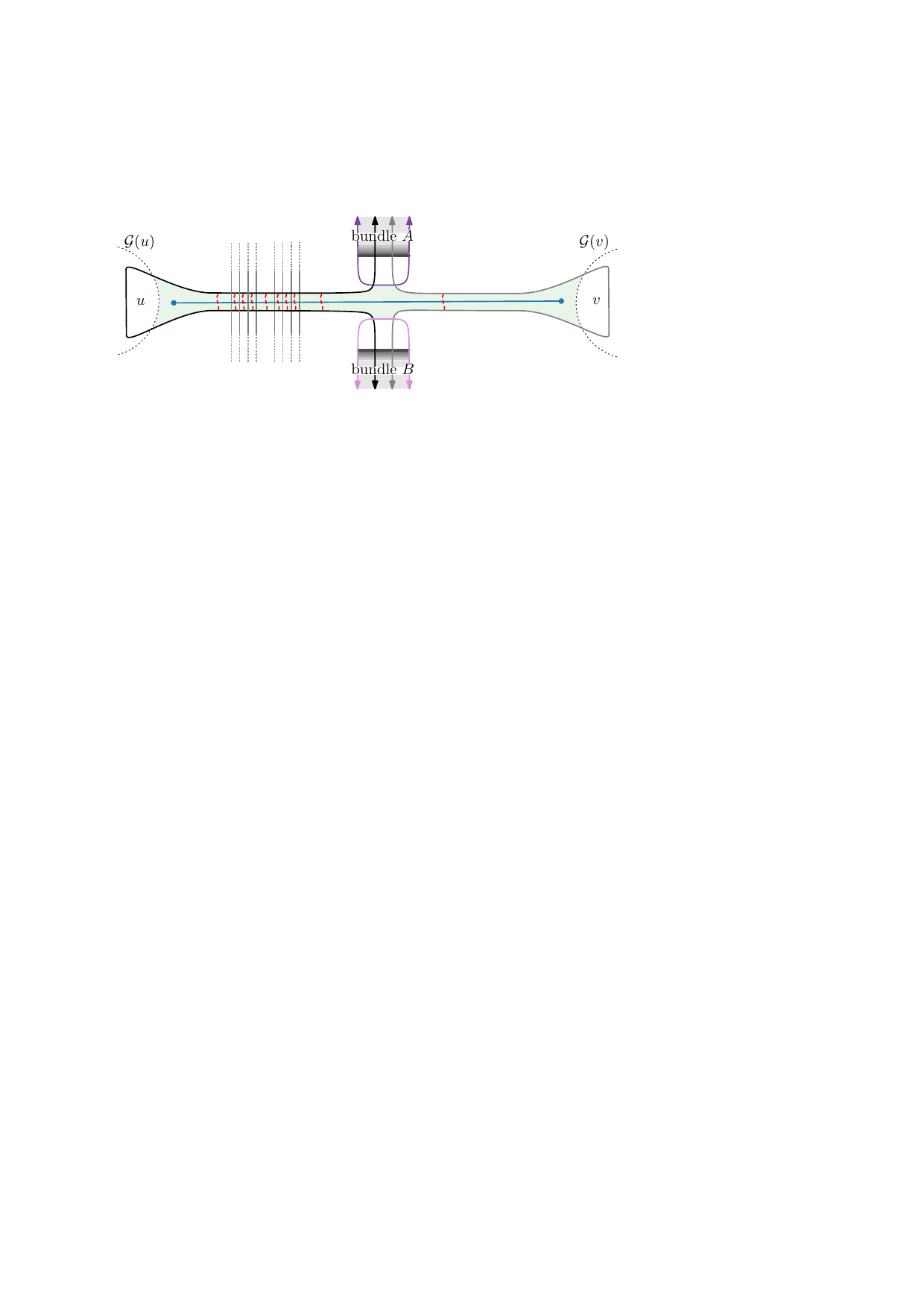} %
	\caption{
		By routing both ends of an open beaker in parallel to the frame, we force $\ell$ to start (or end) at the frame between the two connection points of the beaker.
	}
	\label{fig:app:edgegadget_open}
\end{figure}

The reduction creates $\A$ in the same fashion as above, except that we do not add the edge gadget for exactly one edge $e_o =(v_o, u_o)$ on the outer face of~$G$.
Instead, the open curves of the openings of the beakers in \gadvar{u_o} and \gadvar{v_o}, which would normally form the edge gadget \gadedg{u_o, v_o} are simply connected to the frame.
This forces $\ell'$ to start (and end) at the frame between the points at which the beakers connect to the frame, in order to properly split the \bad{} face in the opening of these beakers.
It is now easy to see that all other properties still hold and \NkR{1} remains \NP-complete even when $\ell'$ can be an open curve.

\section{\textcolor{black}{Randomized \FPT-algorithm for \NkR{1}}}
\label{sec:cycle-edge}

\looseness=-1
In this section, we
show that
\NkR{1} with $k$ \bad{} faces
can be solved by a randomized algorithm in $O\left(2^k\mathrm{poly}(n)\right)$
time, placing
\NkR{1} in the class randomized \FPT\ when parameterized by the number $k$ of \bad{} faces.
We model \NkR{1} as a problem of finding a simple
cycle
(i.e., a cycle without repeated vertices)
in a modified dual graph $G$, subject to a constraint that certain edges
must be visited.

\begin{problem}
  [Simple Cycle with Edge Set Constraints \snesc]
  \label{problem-edge-constraints}
	Given an undirected graph $G=(V,E)$ and $k$ %
	subsets
	$S_1,S_2,\ldots,S_k\subseteq E$ of edges, find a simple cycle, if it
	exists, that contains exactly one edge from each set $S_i$.
	\cutout{(The sets $S_i$ do not have to be disjoint.)}
\end{problem}

\newtext{ 
  We start with the dual graph of the given curve arrangement
  $\mathcal{A}$.  We replace the vertex corresponding to the $i$-th
  \bad{} face $f$ with a set $S_i$ of edges modeling the ways how an
  additional curve can cut all curtains of $f$, as described in
  Section~\ref{resolve-one-bad-face} and shown in
  Figure~\ref{fig:one-bad-face_b}. %
  \newtext{
    To be specific, we place a %
    vertex on each curve segment $s$ bounding $f$ and connect it to the vertex of the face that is adjacent to $f$ across $s$.
    Further we connect two such vertices on curve segments if a
    curve entering $f$ through one segment
    and exiting through the other would cut all curtains of $f$.
    The latter connecting edges, which run through $f$, form the set~$S_i$.
  }
  There is a one-to-one correspondence between the simple cycles containing
 exactly one edge of every set $S_i$ and the resolving
 curves for~$\mathcal{A}$.}

We will describe a randomized algorithm for\removedtext{ this problem,}
\newtext{Problem~\ref{problem-edge-constraints},}
extending
an algorithm of
Bj{\"o}rklund,
Husfeld, and Taslaman~\cite{Bjork2012}.
\begin{theorem}
	\looseness=-1	
  \begin{enumerate}[\rm(a)]
  \item 
	The \snesc problem
	on a graph with $n$
	vertices and $m\ge n$~edges can be solved in 
	$O\left(2^kmn^2\log\frac{2m}n\cdot
	|V(S_1)|\cdot W\right)$
	 time
	and $O(2^kn+m)$ space with a
	randomized Monte-Carlo algorithm,
	with probability at least $1-1/n^{\strut W}$, for any $W\ge 1$.
	Here, $V(S_1)$ denotes the set of vertices of the edges in $S_1$ (note that $S_1$ can be chosen to be the smallest set among $S_1,\ldots,S_k$).
	The model of computation is the Word-RAM with words of size
	$\Theta(k+\log n)$.
	
      \item 
	There is an alternative algorithm (described in \refAppendix{sec:poly-space}{B.3}) needing only polynomial space, namely
	$O(kn+m)$, at the expense of an additional factor $k$ in the
        runtime. It uses words of size 
	$\Theta(\log n)$.
\end{enumerate}
\smallskip

	Both algorithms find the cycle with the smallest number of edges
	if it exists \textup{(\!}with high probability\textup).

\end{theorem}

 \newtext
 {If~$\mathcal{A}$ has $n$ faces, the graph $G$ has $O(n)$ vertices
   and $m=O(n^2)$ edges.
   The quadratic blow-up of $m$ results from the construction as shown in
   Figure~\ref{fig:one-bad-face_b}.
\cutout{In practice, this is not an issue since the faces do not have
  many edges.}
The number of edges can be reduced to $O(n)$, as shown in
   Figure~\ref{fig:one-bad-face_c} and discussed in
   \refAppendix{sec:special-edge-sets}{D}.}
   The
  number $k$ of \bad\ faces is the same as the number $k$ of edge sets~$S_i$.

With the alternative algorithm with polynomial space, since %
$k\le n$, we get:

\begin{corollary}
  \label{main-algorithm}
  The N1R problem with $k$ \bad\ faces in a curve arrangement with
  $n$ faces can be solved in expected time
  $O(2^k \mathrm{poly}(n))$ and $O(kn)$ space.
  \qed
\end{corollary}

We first give a high-level overview of the algorithm.
We start by assigning random weights to the edges from a
sufficiently large finite field $\mathbb F_q$ of characteristic~2.
\newtext{Such a field %
  exists for every size~$q$ that is a power of~2.
  \newtextrev{In a field of characteristic~2, the law $x+x=0$ holds, and
    therefore terms cancel when they occur
an even number of times.}}
\textcolor{black}{The weight of a \cutout{directed }\emph{\newtext{walk}} (with vertex and edge repetitions allowed) is obtained by multiplying the edge weights of all visited edges}.
Our goal is now to compute the sum of weights all closed walks, of
given length, that
satisfy the edge set constraints.
The characteristic-2 property will ensure that the unwanted walks,
those which are not simple, cancel, while a simple
 closed walk makes a nonzero contribution and leads to a nonzero
 sum with high property.
 The crucial idea is that,
while these sets of closed walks can be very complicated,
 we can compute 
the aggregated sum of their weights in polynomial time.
We have to anchor these walks at some starting vertex~$b$,
and we choose $b$ to be one of the vertices incident to an edge of $S_1$.

More precisely,
for each such vertex $b$, %
and for
increasing lengths $l=1,2%
,\ldots,n$, the algorithm computes the quantity
$\hat T_b(l)$, which is
the sum of the weights
of all closed
walks that
\begin{itemize}
	\item start and end at $b$,
	\item have their first edge in $S_1$,
	\item use exactly one edge from each set $S_i$ (and
	use it only once),
	\item and consist of $l$ edges.
\end{itemize}
\cutout{If no such walks exist, $\hat T_b(l)=0$.}

We consider the edge weights as variables %
and regard $\hat T_b(l)$
as a function of these variables.
The result is a
 polynomial %
where each term is a product of
$l$ variables (possibly with repetition), and hence the polynomial has
degree~$l$, unless all terms cancel and it is the zero polynomial.
We %
 apply the following lemma, which is a straightforward
adaptation of a lemma of
Bj{\"o}rklund et al.~\cite{Bjork2012}.%
\cutout{, and whose proof is given
in Section~\ref{shortest-L-nonzero-proof}.}%

\begin{lemma}
  \label{shortest-L-nonzero}
  \begin{enumerate}[a)]
  \item 
\removedtext{Let $L$ be the length of a shortest simple cycle}
\newtext
{Suppose there exists a simple cycle of length $l$} %
that satisfies the edge
set constraints and that goes through an edge %
of $S_1$
\removedtext{that has $b$ as one of its endpoints.}%
\newtext
{incident to~$b$.}
Then the polynomial $\hat T_b(l)$ %
is homogeneous of degree $l$ and is not identically zero.
\item \label{longer}
  \newtext
  {If there is no such cycle of length $\le l$, %
    the polynomial $\hat T_b(l)$ is identically zero.}
\end{enumerate}
\end{lemma}

\newtext{
  \cutout{PROPOSED SENTENCES TO SKETCH THE PROOF.}
The lemma is based on the fact that
each term in the polynomial $\hat T_b(l)$ represents some closed walk.
A term coming from a walk that visits a %
vertex twice
can be matched
with another walk, which traverses a loop in the opposite direction and
contributes the same term. Since the field has
characteristic~2,
these terms
cancel.
A term coming from a simple walk does not cancel.
The %
proof
of Lemma~\ref{shortest-L-nonzero} %
is given
in \refAppendix{shortest-L-nonzero-proof}{B.5}.
}

\newtext{In case (a) of Lemma~\ref{shortest-L-nonzero}},
it follows from the Schwartz-Zippel
Lemma~\cite[Corollary~1]{%
  DBLP:journals/jacm/Schwartz80}
\removeDB{}
{(see also
\cite[Corollary~Q1]{nullstellensatz})}
that, for randomly chosen
weights in~$\mathbb F_q$, $\hat T_b(l)$ is nonzero with probability at least
$1-\text{degree}/|\mathbb F_q| = 1-l/q\ge 1-n/q$.

Thus, if we choose $q>n^2$, we have a success probability of at least
$1-1/n$ for finding the shortest cycle when we %
evaluate the quantities
$\hat T_b(l)$ for increasing~$l$ until they become nonzero.
The success probability can be boosted by %
repeating the experiment with new random weights.

\newtext{In the unlikely case of a failure},
the algorithm may err by not finding a solution although a solution exists, or by finding
a solution\removedtext{ of length $l$} that is not shortest.
\newtextrev{The last possibility is not an issue for our original
problem, where we just ask about the existence of a cycle, of
arbitrary length.}
\removedtext
{(%
By  construction,
  any solution that
is found will always be a simple cycle that satisfies the edge set
constraints.)}

In the following, we will
discuss how we can compute the
quantities 
$\hat T_b(l)$, and the runtime and space requirement for this
calculation.
We
describe
the method for actually recovering the cycle after we have found
a nonzero value in \refAppendix{recover}{B.4}.

\subsection{Computing sums of path weights by dynamic programming}
\label{dyn-prog}

We cannot compute the desired sums $\hat T_b(l)$ directly, but have to do this
incrementally
via a larger variety of quantities 
$ T_b(R,l,v)$
that are defined
as follows:

For $R\subseteq\{1,2,\ldots,k\}$ with $1\in R$, $v\in V$, and $l\ge
1$, we define
$ T_b(R,l,v)$ as
the sum of the weights
of all walks that
\begin{compactitem}
	\item start at $b$,
	\item have their first edge in $S_1$,
	\item end at $v$,
	\item consist of $l$ edges,
	\item use exactly one edge from each set $S_i$ with $i\in R$ (and
	use it only once),
	\item contain no edge from the sets $S_i$ with $i\notin R$.
\end{compactitem}
The walks that we consider here differ from the walks in $\hat T_b(l)$
in two respects: They end at \removedtext{an arbitrary}%
\newtext{a specified} vertex $v$, and the set $R$ keeps
track of
the sets $S_i$ that were already visited.
The quantities $\hat T_b(l)$ that we are interested in arise as
a special case when we have visited the full range
$R=\{1,\ldots,k\}$ of sets~$S_i$ and arrive at $v=b$:
\begin{displaymath}
	\hat T_b(l) =  T_b(\{1,\ldots,k\},l,b).
\end{displaymath}

We compute the values $T_b(R,l,v)$
for increasing values \newtext{$l=1,\ldots,n$}.
The starting values for $l=1$ are straightforward from the definition.

To compute $T_b(R,l,v)$ \newtext{for $l\ge2$}, we collect all stored values of the form
$T_b(R',l-1,u)$ where $(u,v)$ is an edge of $G$ and $R'$ is derived
from $R$
by taking into account the sets $S_i$ to which $(u,v)$ belongs.
We multiply these values with \newtext{the edge weight $w_{uv}$}
and sum them up.
If $(u,v)$ is in some
$S_i$ but $i \notin R$, we don't use this edge.
Formally, %
let $I(u,v) := \{\,i\mid (u,v)\in S_i\,\}$ be the index set of
the sets $S_i$ to which $(u,v)$ belongs. Then
\begin{equation}
	\label{eq:recursion}
	T_b(R,l,v) =
	\sum_
	{\substack{(u,v)\in E\\
			I(u,v)\subseteq R}} w_{uv} \cdot T_b(R\setminus I(u,v),l-1,u)
\end{equation}

\subsection{Runtime and space}
\label{sec:runtime}

\begin {figure}[tbp]
\subfloat[]{%
	\includegraphics[page=1]{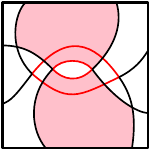}
}\hfil
\subfloat[]{%
	\includegraphics[page=2]{implementation}
}\hfil
\subfloat[]{%
	\includegraphics[page=3]{implementation}
}\hfil
\subfloat[]{%
	\includegraphics[page=4]{implementation}
}
\caption {Input \sublab{a,c} and resulting output \sublab{b,d} generated by
	the implementation; the green curve
	resolves the popular
	faces
	with the smallest number of crossings.
}
\label{fig:implementation}
\end {figure}

\newtext{
  The finite field additions and multiplications %
  in~\eqref{eq:recursion}
  take constant time, see \refAppendix{sec:field}{B.2}.
  Similarly,
  the set operation $R\setminus I(u,v)$
  on subsets of
  $\{1,\ldots,k\}$
  and the
  test $ I(u,v)\subseteq R$
  can be carried out in constant time,
  using bit vectors.
Thus,
  for a fixed
 starting vertex $b\in V(S_1)$ and fixed $R$,
going from $l-1$ to $l$ by the recursion~\eqref{eq:recursion}
takes
$O(m)$ time in total, because each edge $(u,v)$ appears
  in at most one of the sums on the right-hand side.
  The overall runtime is
$O(|V(S_1)|2^kmn)$.}

\newtext{
  As mentioned after Lemma~\ref{shortest-L-nonzero},
  the probability that
  the algorithm misses the shortest simple path is at most
$1/n$.
  To amplify the probability of correctness, we repeat the computation
  $W$ times,
  reducing the failure probability
to $1/n^W$.}

\looseness=-1
We consider each starting vertex $b$ separately, and do not need to store
\newtext{entries for}
lengths $l-1$
\newtext{or shorter}
when proceeding from $l$ to $l+1$;
thus the space requirement is $O(2^kn)$.

For recovering the solution,
the runtime must
be multiplied by $O(n\log\frac{2m}n)$,
see \refAppendix{recover}{B.4}.
Figure~\ref{fig:implementation} shows initial results of an implementation
of our algorithm on two small test instances;
see
also~\cite {thesis/phoebe}.%

\section{Conclusion}
In light of our \NP-hardness and randomized \FPT-algorithm, a natural next step is a deterministic parameterized algorithm.
There are $O(n)$ local possibilities of resolving a single \bad{} face, however, this does not immediately lead to an $O(n^k)$ algorithm (which would place \NkR{1} in \XP), since we might need to branch additionally over all possible connections between these solutions through the dual of $\A$, which can have an unbounded size.

\paragraph{Acknowledgements.}
This work was initiated at the 16th European Research Week on Geometric Graphs in Strobl in 
2019.
A.W. is supported by the Austrian Science Fund (FWF): W1230.
S.T. has been funded by the Vienna Science and Technology Fund (WWTF) [10.47379/ICT19035].
A preliminary version of this work has been presented at the
 38th European Workshop on Computational Geometry (EuroCG 2022) in
 Perugia~\cite{nnwmmlr-rpfca-22}.
\ifarxiv
\else
A full version of this paper,
which includes %
appendices but is
otherwise identical, is
available as a technical report~\cite{denooijer2022removing}.
\fi

\addcontentsline {toc}{section}{References}

\bibliographystyle{splncs04}
\bibliography{impop}

\arxivTHENecg{}{\end{document}}

\newpage
\appendix

\newcommand{\tobeincluded}{}

\section{Proof of Theorem~\ref{thm:hardness}}\label{sec:thm_hard_proof}

\begin{customthm}{1}
	\NkR{1} is \NP-complete.
\end{customthm}
\begin{proof}
	Assume we are given a non-intersecting Eulerian cycle $\mathcal{E}(G)$ of $G$ as a permutation of the edges.
	We now show how to construct the curve $\ell$.
	We choose a random edge $e = (u,v)$ in the permutation and start drawing $\ell$ at the endpoint inside the beaker of \gadedg{u,v} originating from \gadvar{v} along the thin blue axis crossing all \bad{} faces as described above.%
	This will end at an open endpoint inside \gadvar{u}.
	At this point we either connect to the left or right endpoint according to the next edge in $\mathcal{E}(G)$  (which is possible, since $\mathcal{E}(G)$ is non-intersecting). We make this connection either directly or via the forced inner circular part of $\ell$ in \gadvar{u} if one of the two involved endpoints is in beaker $c_{d(u)/2}$ (where $d(u)$ is the degree of $u$).
	These connections are made as described above.%
	Now we are again at an endpoint in a beaker.
	We can repeat this procedure until we cycle back to $e$, at which point we will have reconnected to our starting point.
	Since $\mathcal{E}(G)$ by definition visits every edge exactly once, before cycling back to the first edge, we know that every \bad{} face in the edge gadgets is split.
	Moreover, since $\mathcal{E}(G)$ visits every vertex $v$ exactly $\delta(v)/2$ times and $\mathcal{E}(G)$ is non-intersecting, we know that one of the two possible ways of connecting all endpoints in \gadvar{v} can be chosen to resolve all \bad{} faces in \gadvar{v} (and all other vertex gadgets).
	Since by Observation~\ref{obs:all_bad_faces} there are no other \bad{} faces in $\A$, $\A\cup\ell$ does not contain any \bad{} faces.
	
	Now assume we are given a curve $\ell$, s.t.\ $\A\cup\ell$ does not contain any \bad{} faces.
	The order, in which $\ell$ traverses all edge gadgets gives us a permutation of all edges.
	Since both variants of resolving all \bad{} faces in a vertex gadget connect any endpoint only to an endpoint in a neighboring beaker, consecutive edges in this permutation share an endpoint and both belong to the same face and this permutation is a non-intersecting Eulerian cycle.
	We have shown that there exists a curve $\ell$, s.t., $\A\cup\ell$ contains no \bad{} faces, if and only if $G$ contains a non-intersecting Eulerian cycle and therefore \NkR{1} is \NP-hard.
	
	\removedtext{        
		We represent $\A \cup \ell$ as a graph by introducing vertices at intersection points of curves and connecting vertices corresponding to consecutive intersection points along curves.
		We call such a graph a certificate of \NkR{1}.
		Such a certificate is clearly of polynomial size (polynomially balanced).
		Moreover, we can check in polynomial time, if such a graph contains faces, which are bounded by two or more edges, which originated from the same curve (polynomially verifiable).
		These two criteria are sufficient to show \NP-containment.
	}
	\newtext      
	{It is easy to see that \NkR{1} is in \NP. The input arrangement can be
		represented as a plane graph in which the edges are marked as
		belonging to the different curves or to the frame boundary.
		The certificate is an extension of this arrangement by
		a resolving
		curve~$\ell$. 
		Since $\ell$ cannot visit a face more than once, the certificate is
		of polynomial size.
		It can be easily verified in
		polynomial time whether it is valid,
		in particular, whether it %
		contains no \bad\ faces.
	}  
\qed
\end{proof}

\newif\ifstandalone
\ifdefined\tobeincluded
\else
\standalonetrue

\documentclass[11pt,a4paper]{article}

\usepackage[margin=2cm]{geometry}
\usepackage{amsmath}
\usepackage{amsthm}
\usepackage{amsfonts}
\usepackage{graphicx}
\usepackage{hyperref}
\usepackage{paralist} %
\usepackage[utf8]{inputenc}

\usepackage{complexity}

\newcommand{\newtext}{}
\newcommand{\removedtext}[1]{}

\newtheorem*{problem}{Problem}
\newtheorem{theorem}{Theorem}
\newtheorem{lemma}{Lemma}

\bibliographystyle{plainurl}

\begin{document}

\fi
\section{Additional details of the randomized \FPT-algorithm}
\subsection{Assumption on the word length}
\label{sec:wordlength}
\newtext {Our algorithm needs to store a table with $\Theta(2^kn)$
  entries, namely the entries $T_b(R,l,v)$ with fixed $b$ and~$l$.  We
  assume that these fit in memory and can be addressed in constant
  time.  Thus, it is reasonable to assume a word length of at least
  $k+\log n$ bits.
}

\subsection{Finite field calculations}
\label{sec:field}

Each evaluation of the recursion \eqref{eq:recursion} involves
 additions and multiplications in the
finite field~$\mathbb F_q$, where $q=2^s>n$ is a power of~2.

We will argue that
it is justified to regard the time for these arithmetic operations
as constant, both in theory and in practice.
The error probability can be controlled by choosing the
finite field sufficiently large, or by repeating the algorithm with
new random weights.

For implementing the algorithm in practice, arithmetic in
$\mathbb F_{2^s}$ is supported by a variety of powerful libraries.
see for example \cite{GF-complete}. %
The size that these libraries conveniently offer (e.g., $q=2^{32}$)
will be mostly sufficient for a satisfactory success probability.

For the theoretical analysis,
we propose to choose a finite field of size $q\ge n^2$, leading to a failure
property less than $1/n$, and to achieve further reductions of the
failure
property by repeating the algorithm $W$ times.

We will describe some elementary approach for setting up the finite
field
$\mathbb F_q$ of size $q\ge n^2$, considering that we can tolerate
a runtime and space requirement that is 
linear or a small polynomial in~$n$.
Various methods for arithmetic in finite fields $\mathbb F_{2^s}$ are
surveyed in Luo, Bowers, Oprea, and Xu~\cite{Finite-Fields-2012}
or
Plank, Greenan, and
Miller~\cite{plank13:finite-field-arithmetic}.
The natural way to represent elements of
$\mathbb F_{2^s}$ is as a
polynomial
modulo some fixed
irreducible polynomial $p(x)$ of degree $s$ over
$\mathbb F_{2}$.
The coefficients of the polynomial form a
bit string of $s$ bits.
Addition is simply an XOR of these bit strings.

We propose to use the
split table method,
 which is simple and implements multiplication by a few look-ups in small
 precomputed tables,
see
 \cite[Section~3.3.1]{Finite-Fields-2012}
or \cite[Section~6.5]{plank13:finite-field-arithmetic}.

Let
 $C=\lceil (\log_2 n)/2
 \rceil$, and let $q=2^{4C}$. Thus,
 $q \ge n^2$, as required.

 According to \cite[Theorem 20.2]{shoup-NT}, an irreducible polynomial $p(x)$
 of degree $d=4C$ over $\mathbb F_2$ can be found in
 $O(d^4)=O(\log^4 n)$ expected time, by testing random polynomials of
 degree $d$ for irreducibility.
(Shoup~\cite{shoup-NT} points out that this bound is not tight;
there are also faster methods.)

Multiplication of two polynomials modulo $p(x)$ can be carried out in
the straightforward ``high-school'' way in $O(d^2)=O(\log^2 n)$ steps,
by elementwise multiplication of the two polynomials, and reducing the
product by successive elimination of terms of degree larger than $d$.

We now consider the
bitstring of length $d$ as composed of $r=4$ chunks of size~$C$.
In other words, we write the polynomial $q(x)$ as
\begin{displaymath}
  q(x)=q_3(x)x^{3C}+q_2(x)x^{2C}+q_1(x)x^{C}+q_0(x),
\end{displaymath}
where $q_3,q_2,q_1,q_0$ are polynomials of degree less than~$C$.
Addition takes $O(r)$ time, assuming an XOR on words of length
$C=O(\log n)$ can be carried out in constant time. 
 Multiplication is carried out chunk-wise, using
 $2r-1=7$ multiplication tables.
The $j$-th table contains the products $r(x)r'(x)x^{jC}$, for all
 pairs $r(x),r'(x)$ of polynomials of degree less than~$C$, for $j=0,1,\ldots,2r-2$.
 Multiplication in the straightforward way takes then
 $O(r^3)$ time.
Each multiplication table has
 $2^C\times 2^C=O(n)$ entries
 of $r=4$ words, and it can be precomputed in $O(n d^2) = O(n\log^2 n)$
 time by multiplying in the straightforward way.

 Thus, after some initial overhead of
$ O(n\log^2 n)$ time, which is negligible in the context of the
overall algorithm, with $O(n)$ space, %
arithmetic in $\mathbb F_q$ can be carried out in $O(1)$ time in the
Word-RAM model, where arithmetic, logic, and addressing operations on words
with $\Theta(\log n)$ bits are considered as constant-time operations.

We mention that, with another representation, the initial setup can
even be made deterministic,
and its time reduced to
$ O(n)$ (keeping the
$ O(n)$ space bound), see Appendix~\ref{sec:deterministic-galois}.

\subsection{Reduction to polynomial space}
\label{sec:poly-space}

As described, the algorithm has an exponential factor $2^k$ in the
space requirement.
This exponential space requirement can be eliminated at the expense of
a moderate increase in the runtime,
by using
an inclusion-exclusion trick that was first used by Karp~\cite{karp-82} for the
Traveling Salesman Problem.
\newtext
{The possibility of applying this trick in the context of our problem was already mentioned in
Bj{\"o}rklund et al.~\cite{Bjork2012}.}

For a ``{forbidden} set'' $F\subseteq \{2,3,\ldots,k\}$, $v\in V$,
$1\le l\le n$, and $1\le j\le k$ 
let
$ U_b(F,l,j,v)$ be
the sum of the weights
of all walks that
\begin{compactitem}
\item start at $b$,
\item have their first edge in $S_1$,
\item end at $v$,
\item consist of $l$ edges,
\item contain no edge from the sets $S_i$ with $i\in F$,
\item use in total $j$ edges from the sets $S_i$, \newtext{counted with
  multiplicity:
  If an edge belonging to $p$ different sets $S_i$
  is traversed $r$ times, it contributes $pr$ towards the count~$j$.}
\end{compactitem}
The clue is that we can compute these quantities for each $F$
separately in polynomial time and space, by simply removing the
edges of $S_i$ for all $i\in F$ from the graph.
The dynamic-programming recursion is straightforward.
In contrast to \eqref{eq:recursion}, we have to keep track of the
number  $j$ of edges from $S_1\cup\cdots\cup S_k$.

We regard each index
$i\in \{2,\ldots,k\}$ as a ``feature'' that a walk might have
(or an ``event''), namely that it avoids the edges of $S_i$.
By the inclusion-exclusion formula, we can compute the
sum of weights of paths that have none of the features,
i.e., that visit \emph{all} sets $S_i$. In this way, we get the
following formula:
\begin{lemma}
\begin{equation}
  \label{eq:inc-exc}
  \hat T_b(l)
  =\sum_{F\subseteq \{2,3,\ldots,k\}} (-1)^{|F|}  U_b(F,l,k,b)
\end{equation} 
\end{lemma}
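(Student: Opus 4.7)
My plan is to apply a standard inclusion-exclusion in the spirit of Karp, using the ``feature'' index set $\{2,3,\ldots,k\}$ where feature $i$ means ``the walk avoids every edge of $S_i$''. To make this precise I would first introduce, for each walk $w$, the counts $c_i(w):=|\{\text{positions of }w\text{ whose edge lies in }S_i\}|$ for $i=1,\ldots,k$; the parameter called $j$ in the definition of $U_b(F,l,j,v)$ is then $\sum_{i=1}^k c_i(w)$ (with multiplicity over set memberships), and the condition that the first edge is in $S_1$ forces $c_1(w)\ge 1$. With this notation the walks counted by $\hat T_b(l)$ are exactly those walks from $b$ to $b$ of length $l$, with first edge in $S_1$, and with $c_i(w)=1$ for every $i\in\{1,\ldots,k\}$.

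Next I would expand the right-hand side of \eqref{eq:inc-exc} and swap the order of summation:
\begin{displaymath}
\sum_{F\subseteq\{2,\ldots,k\}}(-1)^{|F|}\,U_b(F,l,k,b)
=\sum_{w}w(w)\sum_{\substack{F\subseteq\{2,\ldots,k\}\\ F\cap R(w)=\emptyset}}(-1)^{|F|},
\end{displaymath}
where the outer sum ranges over walks $w$ satisfying the common constraints (start and end at $b$, length $l$, first edge in $S_1$, $\sum_i c_i(w)=k$), and $R(w):=\{i\in\{2,\ldots,k\}:c_i(w)\ge 1\}$. A walk can contribute to $U_b(F,\ldots)$ only if $F$ is disjoint from $R(w)$, which justifies the restriction on $F$. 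The inner alternating sum then equals $(1-1)^{|\{2,\ldots,k\}\setminus R(w)|}$, vanishing unless $R(w)=\{2,\ldots,k\}$, in which case it equals $1$.

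Finally I would verify that the surviving walks are exactly those counted by $\hat T_b(l)$: from $c_1(w)\ge 1$, $c_i(w)\ge 1$ for every $i\in\{2,\ldots,k\}$, and $\sum_i c_i(w)=k$, one gets $c_i(w)=1$ for all $i$, which is the defining condition of $\hat T_b(l)$. The one place where care is really needed, and the main potential pitfall, is the handling of overlapping sets $S_i$: one must stick to the multiplicity convention $j=\sum_i c_i(w)$ rather than $j=|\{\text{positions with edge in }\bigcup_i S_i\}|$, otherwise (for instance with $k=2$) a walk using two edges that both lie in $S_1\cap S_2$ would be counted on the right-hand side while being excluded from $\hat T_b(l)$, breaking the identity.
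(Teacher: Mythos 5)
Your proof is correct and uses the same inclusion-exclusion approach as the paper; the swap of summation order followed by the $(1-1)^{|\cdot|}$ identity is simply a more explicit rendering of the paper's appeal to ``the inclusion-exclusion theorem.'' You have, however, caught a genuine subtlety that the paper glosses over: the identity requires the multiplicity convention $j=\sum_{i=1}^k c_i(w)$, exactly as you insist. The paper's description of the dynamic program---``we have to keep track of the number $j$ of edges from $S_1\cup\cdots\cup S_k$''---reads as though $j$ counts walk positions whose edge lies in the union, and under that reading the lemma fails by precisely your counterexample (with $S_1=S_2$ and a walk using two edges from the common set, $j$ would equal $k$, the walk would survive the alternating sum, yet it is excluded from $\hat T_b(l)$). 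The paper's one-line proof implicitly assumes the multiplicity convention, since the step from ``$j=k$ and every $S_i$ hit at least once'' to ``every $S_i$ hit exactly once'' only works when $j=\sum_i c_i$; your write-up makes this dependence explicit, which is an improvement.
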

\begin{proof}
  The parameters $l$ and $b$ match on both sides.
  By the inclusion-exclusion theorem,
  the right-hand side is the sum of all walks in which every set $S_i$
  is visited \emph{at least} once.
  Since the parameter $j$ is equal to $k$, we know that %
  there were only
  $k$ visits to sets $S_i$; thus,
   every set $S_i$
  is visited \emph{exactly} once.
  \qed
\end{proof}

\newtext{The sign $(-1)^{|F|}$ is of course irrelevant over a field of
  characteristic~2.}

\newtext{
  To reduce the space requirement as much as possible,
  we organize the computation as follows.
  First, each starting point $b$ is considered separately.
  We initialize $n$ variables for accumulating the contributions
  to the quantities $\hat T_b(l)$ according to
  \eqref{eq:inc-exc}.
  Then, for each forbidden set $F$ separately,
  we compute the quantities
  $U_b(F,l,j,v)$ for all $j$ and $v$,
  incrementally increasing $l=1,2,\ldots ,n$.
  Since we need to remember only the entries for
  two consecutive values $l$ at a time,
  this requires only $O(nk)$ space.
  Along the way, we add the contributions
   $U_b(F,l,k,b)$ to \eqref{eq:inc-exc}.}

In summary, we can calculate
$\hat T_b(1),\hat T_b(2),\ldots$
in space $O(nk)$.
\newtext{Compared to the exponential-space algorithm,
}we need an additional factor~$k$ in the runtime,
  for the $k$
  choices of the parameter $j$.

\subsection{Recovering the solution}
\label{recover}

The algorithm, as described so far, works as an oracle that only gives a yes-no answer (and a length~$l$), but it
does not produce the solution.
To recover the solution, we will call the oracle
repeatedly with different inputs.

Suppose the algorithm was successful in the sense that some number
$ \hat T_b(l)$
turned out to be nonzero,
after finding only zero values for all smaller values of~$l$.
By 
Lemma~\ref{shortest-L-nonzero}, we conclude that there
exists a simple
cycle through $b$ satisfying the edge set constraints, possibly (with
small probability) shorter than $l$.

We will find this cycle by selectively deleting parts of the edges and
recomputing
$ \hat T_b(1),\hat T_b(2),\ldots,\allowbreak \hat T_b(l)$ for the reduced graph to
see whether this graph still contains a solution.
In this way, we will determine the successive edges of the cycle, and,
as we shall see,
 we will know the
cycle
after at most $4n\log_2 \frac {4m}n$ iterations.

The first edge out of $b$ is an edge of $S_1$, and thus we start by
looking for the first edge among these edges. (The other edge incident
to~$b$, by which we
eventually return to $b$, is not in $S_1$, and thus there is no
confusion
between the two edges incident to~$b$.)
In the general step, we have determined an initial part of the cycle
up to some vertex $u$, and we locate the outgoing edge among the edges
$(u,v)$ incident to $u$, excluding the edge leading to $u$ that we
have already used.

In general, for a vertex $u$ of degree $d_u$, we have a set of $d_u-1$
potential edges. We locate the correct edge by binary search: We split
the potential edges into two equal parts, and query whether a cycle
still exists when one or the other part is removed.
It may turn out (with small probability) that none of the two
subproblems
yields a positive answer.
In this case, we repeat the oracle with new
random weights. Since the success probability is greater than 1/2, we
are guaranteed to have a positive answer after at most two trials, in
expectation. 
\newtext
{(We mention that such repeated trials may be necessary only when the length $l$ for which
  we are looking is not the shortest length of a feasible cycle.
  Otherwise one can show,
  using arguments from the proof in Section~\ref{shortest-L-nonzero-proof},
  that
 the polynomial for the original problem is the sum of the polynomials
 for the two subproblems. Thus,
   at least one of the two
 subproblems
 must give a positive answer.)}
After at most $2\lceil \log_2 (d_u-1) \rceil$ successful queries, we have
narrowed down the search to a single outgoing edge $uv$, and we continue at the
next vertex~$v$.

For a walk $W$ of length $l$, we use, in expectation, less than
$Q := \sum_{u\in W}4(1+\log_2 d_u)$
queries, where
$\sum_{u\in W}d_u\le 2m$.
$Q$ can be bounded by
\begin{displaymath}
  Q \le 4l\left(1+\log_2 \tfrac{2m}l\right)
  \le 4n\left(1+\log_2 \tfrac{2m}n\right)
=   4n\log_2 \tfrac{4m}n,
\end{displaymath}
as claimed.

During this procedure, %
it may also turn out that a solution with fewer than $l$ edges exists.
In this case, we know that we must have been in the unlikely case that
the original algorithm failed to produce a nonzero value for the shortest solution $l$.
We simply adjust $l$ to the smaller value and continue.

\newtext{Note that this procedure is guaranteed to produce a simple cycle, although
not necessarily the shortest one.
 The algorithm selects a branch only when the corresponding polynomial
 is nonzero, which implies that a simple cycle exists in that branch.}

Some simplifications are possible.
For  edges that are known to belong to every solution (for example   
if some set $S_i$ contains only one edge), we can assign unit weight,
thus saving random bits, reducing the degree of the polynomial, and
increasing the success probability.
\newtext{Edges that would close a loop can be discarded.}
When the graph is sparse and $m=O(n)$, we can simply try the $d_u-1$ edges one
at a time instead of performing binary search, at no cost in terms of
the asymptotic runtime.

\subsection{Proof of Lemma~\ref{shortest-L-nonzero}}
\label{shortest-L-nonzero-proof}
(a) By assumption, there is a simple cycle
among the walks whose weights are collected in $\hat T_b(l)$, and
we easily see that the monomial corresponding to such a walk occurs
with coefficient~1. Hence  $\hat T_b(l)$ is not identically zero.
By definition,
  $\hat T_b(l)$ is a sum of weights of walks of length $l$, and hence it is clear that
it is homogeneous of degree $l$.

(b) For the second statement of the lemma,
we have to show that 
 $\hat T_b(l)$ is zero if there is no simple walk of length $l$ or
 shorter.
Since the field has characteristic~2, it suffices to establish a
matching among those closed walks that satisfy the edge set
constraints but don't represent simple cycles.
Let $W$ be such a walk. We will map $W$ to another walk $\phi(W)$ that uses the
same multiset of edges, by reversing (flipping) the order of the edges
of a subpath between two visits to the same vertex~$v$.
Our procedure closely follows Bj{\"o}rklund et al.~\cite{Bjork2012}, but we
correct an error in their description.

\begin{figure}[htb]
  \centering
\begin{align*}
  W = W_0 &= 123415651432345461786571 = 1[23415651432]345461786571\\
W_0'=  W_1 &= 1\underline 2345461786571 = 123[454]61786571
  \\
W_1'=  W_2  &  = 123\underline461786571 = 1234[61786]571
  \\[1ex]
W &= 1234156514323454[61786]571\\
\phi(W) &= 1234156514323454[68716]571
  \end{align*}  
  \caption{Mapping a nonsimple cycle $W$ to another cycle
    $\phi(W)$. The start vertex is $b=1$.}
  \label{fig:phi-W}
\end{figure}

An example of the procedure is shown in Figure~\ref{fig:phi-W}.
We look for the first vertex $v$ that occurs several times on the
walk. (During this whole procedure, the occurrence of $b$ at the start of
the walk is never considered.)
We look at the piece $[v\ldots v]$ between the first and last
occurrence of $v$ and flip it.
If the sequence of vertices in this piece is not a palindrome, we are done.
Otherwise, we cut out this piece from the walk.
The resulting walk will still visit an edge from each $S_i$ because
such an edge cannot be part of a palindrome, since it is visited only
once.
Since the resulting walk $W'$ is shorter than $l$, %
and thus shorter than $L$, by assumption, it cannot be a simple
cycle, and it must contain repeated vertices.

We proceed with $W'$
instead of $W$.
Eventually we must find a
piece $[v\ldots v]$ that is not a palindrome. We flip it in the
original walk $W$, and the result is the walk $\phi(W)$ to which $W$
is matched.
(The flipped piece
$[v\ldots v]$ does not necessarily start at the
first occurrence of $v$ in the original sequence $W$, because such an occurrence
might have been eliminated as part of a palindrome.\footnote
{Here our procedure differs from the method described in
  \cite{Bjork2012}, where the flipped subsequence extends between the first and
  last occurrence of~$v$ in~$W$.
  In this form, the mapping $\phi$ is not an
  involution. The proof in~\cite{Bjork2012}, however, applies to the method as described here.
  This inconsistency was confirmed %
  by the authors
  (%
  Nina Taslaman, private communication, February 2021).
})

It is important to note that
after cutting out a palindrome $[v\ldots v]$,
all repeated vertices must come \emph{after} the vertex $v$.
Hence, when the procedure is applied to $\phi(W)$, it will perform
exactly the same sequence of operations until the last step, where it
will flip $\phi(W)$ back to~$W$.

Since the first edge of the walk is unchanged, the condition that
this edge must belong to~$S_1$ is left intact.

This concludes the proof of
Lemma~\ref{shortest-L-nonzero}.
\qed %

\subsection{Other approaches}

The original algorithm of Bj{\"o}rklund et al.~\cite{Bjork2012} considers
simple paths through $k$ specified \emph{vertices} or (single) edges.
The treatment of vertex visits leads to some complications:
To make the argument for Lemma~\ref{shortest-L-nonzero} valid,
Bj{\"o}rklund et al.'s %
definition of allowed walks had to explicitly forbid ``palindromic
visits'' to the specified vertices, visiting the same edge twice in
succession (condition P4).  Consequently, the algorithm for
accumulating weight of walks has to take care of this technicality.

Another Monte Carlo FPT algorithm for the problem of finding a simple
cycle through $k$ specified vertices was given by
Wahlström~\cite{Wahlstrom13}. It runs in time $O(2^k\mathrm{poly}(n))$
in a graph with $n$ vertices.
This algorithm uses interesting algebraic techniques, and it is
also based on inclusion-exclusion, but it does not seem to extend %
to the case where one out of a \emph{set} of vertices (or edges) has
to be visited.

\section{Variation: Deterministic setup of finite field computations
  in characteristic 2}
\label{sec:deterministic-galois}

\subsection{Getting a primitive polynomial in quadratic time}
\label{sec:primitive-polynomial}

There is a completely naive algorithm for
constructing $\mathbb F_q$ in $O(q^2)$ time and $O(q)$ space,
for $q=2^s$, assuming $q$ fits in a word ($s$ bits). Simply
try out all polynomials $p(x)$
over $\mathbb F_2$
of degree less than $s$. There are $q$ possibilities.

For each $p(x)$, try to construct the logarithm table (``index
table'') in $O(q)$ steps by trying whether the polynomial $x$ generates the
nonzero elements of %
$\mathbb F_2[x]$:
Start with the string $00\ldots01$ representing the polynomial~$1$,
and multiply by $x$ by shifting to the left, and adding $p(x)$ (XOR
with the corresponding bit string) to clear the highest bit if
necessary.
Repeat $q-2$ times and check whether
$00\ldots01$ reappears. If not, then we are done.

(This will be successful iff $p(x)$ is a primitive polynomial modulo~$2$.
Actually, it is sufficient to check the powers $x^k$ of $x$ where $k$
is a maximal proper divisor of $q-1$. The primes dividing $q-1$ can
be trivially found in $O(\sqrt q)$ time, and there are less than
$\log q$ of them.
Computing~$x^k$ takes $O(s^2\log q)=O(\log^3q)$ time if the $O(s^2)$
schoolbook method of multiplication modulo $p(x)$ is applied, together
with repeated squaring to get the power.
Thus, $O(q\log^3q)$ time instead of $O(q^2)$. Probably a gross
overestimate because primitive polynomials modulo~2 are frequent; there are
$\phi(q-1)/s$ of them.)

\subsection{From \texorpdfstring{$\mathbb F_q$}{Fq} to
  \texorpdfstring{$\mathbb F_{q^2}$}{F(q**2)} in \texorpdfstring{$O(q)$}{Q(q)}
  deterministic time and space}
\label{sec:q2}

This is achieved by a degree-2 field extension.
We look for an irreducible polynomial of the form $p(x)=x^2+x+p_0$ over
$\mathbb F_q$.
If this polynomial were reducible, we could write
\begin{align*}
  p(x)=(x+a)(x+b)=
  x^2+(a+b)x+ab
\end{align*}
with $a+b=1$, hence $b=1+a$, and
\begin{align*}
  p(x)=(x+a)(x+(a+1))=
  x^2+x+a(a+1)
\end{align*}
The expression $a(a+1)$ can take at most $q/2$ different values,
because $a=c$ and $a=c+1$ lead to the same product, since $(c+1)+1=c$.
Thus, we can group the $q$ potential values $a$ into $q/2$ pairs with the
same product, and at least $q/2$ must be unused.
We can find the range of
$a(a+1)$ by marking its values in an array of size $q$. Any unmarked
value can be used as the constant term $p_0$. This takes $O(q)$ time
and space.

Multiplication of two polynomials
$a_1x+a_0$ and $b_1x+b_0$ gives the product
\begin{align*}
  c_1x+c_0&=(a_1x+a_0)(b_1x+b_0)
\\&
  =a_1b_1x^2 + (a_1b_0+a_0b_1)x+a_0b_0
\\&
  =a_1b_1x^2 + (a_1b_0+a_0b_1)x+a_0b_0 - a_1b_1(x^2+x+p_0)
\\&
  =(a_1b_0+a_0b_1+a_1b_1)x+(a_0b_0+a_1b_1p_0)
\end{align*}
Multiplication can therefore be carried out as follows:
\begin{align*}
  c_1 &= a_1b_0+a_0b_1+a_1b_1
  \\&
  =(a_1+a_0)(b_1+b_0)-a_0b_0
  \\
c_0&=  a_0b_0+a_1b_1p_0
\end{align*}
with four multiplications in $\mathbb F_q$ (and four additions), using the common term
$a_0b_0$ for both coefficients.\footnote
{This is the same trick as for multiplying two complex numbers with
three multiplications instead of four. By contrast, \cite[Section
6.8]{plank13:finite-field-arithmetic} %
propose irreducible polynomials of the form $x^2+p_1x+1$, leading to
five multiplications. On the other hand, in cases where multiplications
are done by index tables, the mere number of multiplication operations
does not determine the runtime alone; a common factor that appears in several
multiplications saves lookup time in the logarithm tables.}

\subsection{Setup of a finite field \texorpdfstring{$\mathbb F_q$}{Fq}
  with
  \texorpdfstring{$q>n^2$}{q>n**2}
  in
  deterministic \texorpdfstring{$O(n)$}{O(n)}
  time and space}

Let
 $C=\lceil (\log_2 n)/2
 \rceil$.
First we construct $\mathbb F_{q}$ for  $q=2^{C}=O(\sqrt n)$ in $O(q^2)=O(n)$
time, as described in Section~\ref{sec:primitive-polynomial}.
To carry out multiplications in $\mathbb F_{q}$ in constant time, we can store
a logarithm and antilogarithm table, in $O(q)$ space, or we can even
compute a complete multiplication table, in $O(q^2)=O(n)$ time and space.

Then, by the method of Section~\ref{sec:q2},
we go from
$q=2^{C}$ to $q=2^{2C}$, and finally from
$q=2^{2C}$ to
$q=2^{4C}$, in $O(2^{2C})=O(n)$ time and space.

Then addition and multiplication in 
$\mathbb F_{2^{4C}}$ can be carried out in constant time.
Multiplication goes down two recursive levels, from 
$q=2^{4C}$ via $q=2^{2C}$ to
$q=2^{C}$, before the 16 resulting multiplications are resolved by table look-up.

If we prefer, we can eliminate the lower level of recursion by building a log/antilog table for
$q=2^{2C}=O(n)$ of size $O(n)$, to do the multiplication by table
look-up already at this level. To prepare
the tables, we need a generating element of 
$\mathbb F_{2^{2C}}$. Such a generating element can be constructed in
$O(2^{2C})=O(n)$ time,
 as shown in the next section~\ref{cycle-generator}.

\subsection{Constructing an index table without knowing a
  generator}
\label{cycle-generator}

We assume that multiplication in $\mathbb F_q$ takes constant time,
but no generating element for the multiplicative group of
$\mathbb F_q$ is given.  Our goal is to construct logarithm and
antilogarithm tables (index tables), of size $q-1$, in time $O(q)$.  In
order to generate these tables, we need a generating element.

Essentially, we have to find a generator of a cyclic group, which is
given by a multiplication oracle and the list of its elements.

We start with an arbitrary %
element $a_1\ne 0$ and try to construct the
index table by running through the powers of $a_1$. We mark
the elements that we find.
If the process returns to $1$ before marking all nonzero elements,
we pick an unmarked element $a_2$ and run the same process with $a_2$.

If $a_1$ has order $o_1$ in the multiplicative group (which is cyclic of
order $q-1$),
and $a_2$ has order~$o_2$, we show how to construct an element $a$ of
order $o=\mathrm{lcm}(o_1,o_2)$, which generates the subgroup
$\langle a_1,a_2\rangle$,
as follows:

First we ensure that $o_1$ and $o_2$ are relatively prime, without
changing $\mathrm{lcm}(o_1,o_2)$.
For each common prime divisor $p$ of $o_1$ and $o_2$, we determine the
largest power of $p$ dividing $o_1$ and $o_2$, respectively:
$p^{f_1}|o_1$
and $p^{f_2}|o_2$.
If $f_1\le f_2$ we
replace $a_1$ by $(a_1)^{p^{f_1}}$, dividing the order $o_1$  of $a_1$
by $p^{f_1}$; otherwise we proceed analogously with $a_2$.

After this preparation, $o=\mathrm{lcm}(o_1,o_2)=o_1o_2$.
Let $g$ be a generator of the group 
$\langle a_1,a_2\rangle = \{1,g,g^2,\ldots,g^{o-1}\}$.
Then 
$\langle a_1\rangle$ is generated by $g^{o/o_1} = g ^{o_2}$,
and we know that
$a_1 = g ^{j_1o_2}$ for some $j_1$. %
Similarly,
$\langle a_2\rangle$ is generated by $g^{o/o_2} = g ^{o_1}$,
and %
$a_2 = g ^{j_2o_1}$ for some~$j_2$. %

By the extended Eulerian algorithm for calculating the greatest
common divisor of $o_1$ and $o_2$, we find $u_1,u_2$ such that
\begin{displaymath}
  o_1u_1+o_2u_2 = 1.
\end{displaymath}
We claim that $a := a_1^{u_2}  a_2^{u_1}$ generates
$\langle a_1,a_2\rangle$.
To see this, we show that $a_1$ and $a_2$ are powers of $a$,
in particular, $a^{o_2}=a_1$ and  $a^{o_1}=a_2$:
\begin{align*}
  a^{o_2}  
    =
    (a_1^{u_2}  a_2^{u_1})^{o_2}
  &=
  g^{(j_1o_2u_2+j_2o_1u_1)o_2}
  \\&
  =  g^{(j_1(1-o_1u_1)+j_2o_1u_2)o_2}
  \\&
  =  g^{j_1o_2-j_1o_1u_1o_2+j_2o_1u_2o_2}
  \\&
  =  g^{j_1o_2} (g^{o_1o_2})^{-j_1u_1+j_2u_2}
  =  a_1\cdot 1^{-j_1u_1+j_2u_2} 
  = a_1
\end{align*}
The proof of the equation $a^{o_1}=a_2$ is analogous.

In summary, the above procedure shows how we get from
an element $a_1$ generating a subgroup
$\langle a_1\rangle$
of
order $o_1$ 
and an element $a_2$ that is not in that subgroup to
an element $a$ generating a strictly larger subgroup
$\langle a\rangle = \langle a_1,a_2\rangle$,
whose order $o$ is a multiple of $o_1$. The procedure can be carried
out in $O(o)$ time, assuming an array of size $q-1$ (corresponding to
the whole group) is available.
We repeat this procedure until we have found a generating element for
the whole group.
Since the size~$o$ is at least doubled in each step,
the procedure can be carried out
in $O(q)$ time and space in total.

\ifstandalone
   \let\next\relax
\else
   \let\next 
\fi
\next

\bibliography{../EuroCG22/impop}

\section{Collected Stuff about finited fields and primitive
  polynomials}
\label{sec:primitive}

The most convenient for preparing a table of logarithms are
\emph{primitive polynomials} $p(x)\in \mathbb F_2[x]$:
irreducible polynomials of degree $m$ for which the powers $1,x,x^2,x^3,\ldots$
generate all $2^m-1$ nonzero elements of
$ \mathbb F_2[x]/p(x)$.
In other words, $x$ is a primitive element.

1. The number of primitive polynomials is $\phi(2^m-1)/m$.

\begin{displaymath}
  \phi(n) \ge \frac n
  { e^\gamma \ln \ln n +
\frac{2.50637}
    {\ln \ln n}}
\end{displaymath}
for $n\ge 3$,
where $\gamma \approx0.5772$
is Euler's constant.

Approximate formulas for some functions of prime numbers, J.B. Rosser and L. Schoenfeld, Illinois J. Math. 6 (1962) 64--95.
Formula (3.42) on p. 72

Note: actually
  $+\frac5 {2 \ln \ln n}$ with the exception of
$n=223092870=2.3.5.7.11.13.17.19.23$, formula (3.41)

Consequence: The probability that a random polynomial of degree $d$
over $\mathbb F_2$ is primitive is at least
$\Omega(1/(d \log d))$.
(If $2^d-1$ is a prime, then every irreducible polynomial is
primitive.
Probability of being irreducible is at least $1/2d$.)

Tables of Finite Fields
Author(s): J. D. Alanen and Donald E. Knuth
Source: Sankhyā: The Indian Journal of Statistics, Series A
(1961-2002), Dec., 1964, Vol. 26, No. 4 (Dec., 1964), pp. 305-328
Published by: Indian Statistical Institute
Stable URL: \url{https://www.jstor.org/stable/25049338}

They speak of
``indexing polynomial'' instead of ``primitive polynomial''
because  ``primitive polynomial'' is sometimes (for example in Victor
Shoup's book) used for a polynomial
where the greatest common divisor of the coefficients is 1. (Gauss's
Lemma says that the product of two primitive polynomials (in the
latter sense) is primitive.)

Knuth TAOCP Vol.2 \S4.6.1 p.422, distinguishes
``primitive polynomial'' (whose coefficients are relatively prime)
from ``primitive polynomial modulo $p$'' discussed in \S3.2.2, p.30.

2. With a primitive (or ``indexing'') polynomial, generation of the
logarithm table becomes very simple.
multiplication by $x$ is just a left shift by 1 position, possibly
followed by a XOR with $p(x)$ to clear the highest bit.

$O(1)$ on a Word-RAM.

A Computational Introduction to Number Theory and Algebra (Version 2) Victor Shoup
\url{https://shoup.net/ntb}

Theorem 20.2.Algorithm RIP uses an expected number
ofO(4len(q))opera-tions inF, and its output is uniformly distributed
over all monic irreducibles ofdegree.

for $q=2$.

$O(d^4)$ for finding a random irreducible polynomial of degree $d$
over $\mathbb F_2$.

Shoup points out that this runtine is an over-estimate.

``the application to the analysis of Algorithm RIP, is essentially due
to Ben-Or [14, 1981]. If one implements Algorithm RIP using fast polynomial
arithmetic,  one gets an expected cost of [slightly superquadratic in $\ell$]
operations in F.   Note that Ben-Or’s analysisis a bit incomplete ---
see Exercise 32 in Chapter 7 of Bach and Shallit [11] for a complete
analysis of Ben-Or’s claims.''

[11, p.191--192]:
irreducibility test.
The expected number of bit operations
for finding an irreducible polynomial is $O(d\lg d \,\lg q \,(\lg f)^2)$.
$f$ is the polynomial. The field is $\mathbb F_q$.
$\lg f = (d+1)\lg q$ is defined on p.132, it is the number of bits in
a standard dense representation (and does not depend on the value of $f$).
The factor $(\lg f)^2$ is the time for multiplication (and division, if defined) mod $f$.

Solution on p.359.
 
\hrule
STUFF FROM PREVIOUS VERSIONS.

Multiplication can be reduced to addition (modulo $q-1$)
via a table of logarithms and antilogarithms of size $q$
cite[Chapter 10, Table~A]{Lidl-Niederreiter}.
Also [Alanen and Knuth].

If $q$ is chosen between $2n$ and $4n$, just barely ensuring
a success probability of at least $1/2$, then such tables
can be stored in $O(n)$ words of $s=O(\log n)$ bits,
and they can be prepared in $O(n\times s) =
O(n\log n)
$ time if a primitive polynomial is used.
 This is by far dominated by the other parts of the algorithm.
For all reasonable values of $s$,
 primitive polynomials of degree~$s$ over
 $\mathbb F_{2}$ are tabulated and readily
 available.

 Split table method
 \cite[Section~3.3.1]{Finite-Fields-2012}
 
 An alternative, which allows larger orders $q=2^s$, is to
 choose a small constant $r\ge 2$
 and consider
 bitstrings composed of $r$ chunks of size
 $C=\lceil (\log_2 n)/2
 \rceil$.
Addition takes $O(r)$ time, assuming an XOR on words of length
$C=O(\log n)$ can be carried out in constant time. 
 Multiplication is carried out chunk-wise, using
 $2r-1$ multiplication tables, each with 
 $2^C\times 2^C=O(n)$ entries
 of $r$ words.
 Multiplication can then be carried out in a straightforward way in
 $O(r^3)$ time.

In this way, one can achieve a field or order $q\ge n^{r/2}$
and reduce the failure probability to $q/n\le n^{1-r/2}$.
 The tables need $O(r^3n)$ words of storage, and they can be prepared
 in [CHECK THIS!]
 $O(r^3n\log(rn))$ time. For small constant $r$, all of this is
 negligible compared to the requirements of the remaining
 algorithm.

 (use PRIMITIVE polynomial!) 

 Alternative

 $\mathbb F_{2^{Cr}}$
as a finite field extension of degree $r$ over the base field
$\mathbb F_{2^{C}}$
and use table-lookup techniques for the base field.
Multiplication in $O(r^2)$ time.

see the survey \cite{Finite-Fields-2012}

 However, since the effort for finite field operations increases
 more than linearly with the logarithm of the order of the field, 
 the most efficient way to reduce the failure probability beyond $1/n$
 is to use a fixed number, say $r=4$, or chunks and carry out $W$
 independent random trials. The time and space is multiplied by a
 factor $W$, and the failure probability goes down to $1/n^W$.
 For this choice of parameters, we formulate the resulting runtime as follows:

 begin{theorem}
   For any $W\ge 1$
   $O(W$
Monte Carlo, NOT Las Vegas
   $1-1/n^W$
 end{theorem}
 
 \url {https://www.ssrc.ucsc.edu/Papers/plank-fast13.pdf}
 Screaming Fast Galois Field Arithmetic Using Intel SIMD Extensions

GR: [ Maybe this is too detailed for the laymen but at the same time
too ignorant for the experts.]

Joachim von zur Gathen and Jürgen Gerhard,
Modern Computer Algebra, third edition, Cambridge University Press, 2013.

\end{document}

\section{Adapting the Edge Set Constraints}
\label{sec:special-edge-sets}

\newtext{
In Section~\ref {sec:cycle-edge} we have outlined
a modified construction of the graph $G$ on which a nonintersecting
Eulerian cycle is sought,
which avoids the quadratic blow-up of the number of edges.
}

\newtext{
Here we give more details of this construction, see
Figure~\ref{fig:one-bad-face_c}.
We cut off each run of
consecutive %
edges, like $\textit{fghi}$, by an additional edge (shown
dotted in
Figure~\ref{fig:one-bad-face_c}) and place a single terminal node
there.}

\newtext{
One has to take care that the cycle that is found does not use two such
terminal edges in succession, like the edges crossing $f$ and $h$,
because such a cycle would not correspond to a valid curve. This
constraint must be added to the problem definition, and the
recursion~\eqref{eq:recursion} %
must be modified accordingly.}

The sets $S_i$ have %
a special structure.
Each set $S_i$ consists of a few
vertex-disjoint edges (the thick %
edges in
Figure~\ref{fig:special-edge-sets}), which we call \emph{central
edges}.
We call the edges connecting the central edges to the remainder of the graph the \emph{peripheral edges}.
We don't want to consider cycles that use two such peripheral edges (of the same
$S_i$) in succession, without going through the central edge.
We impose this as an extra condition on the walks whose weights we
accumulate.

\begin{figure}[htb]
\centering
\includegraphics{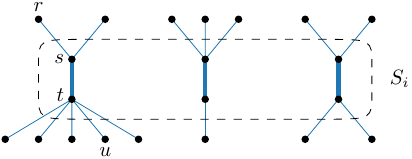}  
\caption{A set $S_i$ and its connecting edges}
\label{fig:special-edge-sets}
\end{figure}

It is easy to incorporate this condition in the
recursion of Section~\ref{dyn-prog}:
Instead of the quantities
$ T_b(R,l,v)$, we work with quantities
$ \tilde T_b(R,l,v,p)$ that depend \newtext{on} an additional parameter~$p$.
This is one bit
that tells whether the last edge of the walk has traversed a peripheral edge
in the direction towards the central edge. If this is the case, we
force the walk to use the central edge in the next step.

In this way, the additional condition incurs a blow-up of at most a
factor 2 in the size of the dynamic programming tables and in the runtime.

\indent\newtext{
We now argue that
Lemma~\ref{shortest-L-nonzero} still holds for these modified
quantities.
The proof in
Section~\ref{shortest-L-nonzero-proof} goes through for the following
reason. The conditions on allowed walks ensure that
an %
endpoint of a central edge,
like the vertex~$s$ in Figure~\ref{fig:special-edge-sets}), is
always part of a subpath consisting of a central edge
surrounded by two peripheral edges,
like $(r,s,t,u)$ or
$(u,t,s,r)$.
Such a subpath cannot be part of a palindrome,
since $(s,t)$ belongs to a special set $S_i$,
and for the same reason, $s$ can never be a repeated vertex of a walk.
If the bijection constructed in the proof reverses
a subpath that goes through
the vertex $s$, this is no problem because
the reversed traversal
does not violate %
the extra condition.%
}
\end{document}